\documentclass[sigconf]{acmart}

\usepackage{amsthm}
\usepackage{xr}

\externaldocument{supplementary}

\usepackage{amsmath}
\usepackage{verbatim}
\usepackage{ifthen}
\usepackage{url}
\usepackage{xspace}
\usepackage{subcaption}
\usepackage{calrsfs}
\usepackage{pifont}
\usepackage{amsfonts}
\usepackage{caption}
\usepackage{booktabs}
\usepackage{multirow}

\usepackage{algpseudocode}
\usepackage[linesnumbered, ruled, vlined]{algorithm2e}

\usepackage{tikz}
\usetikzlibrary{shadows,shadings,shapes.symbols,positioning}

\tikzset{
diagonal fill/.style 2 args={fill=#2, path picture={
\fill[#1, sharp corners] (path picture bounding box.south west) -|
                         (path picture bounding box.north east) -- cycle;}},
reversed diagonal fill/.style 2 args={fill=#2, path picture={
\fill[#1, sharp corners] (path picture bounding box.north west) |- 
                         (path picture bounding box.south east) -- cycle;}}
}


\newcommand{\squishlist}{
 \begin{list}{$\bullet$}
  {  \setlength{\itemsep}{0pt}
     \setlength{\parsep}{3pt}
     \setlength{\topsep}{3pt}
     \setlength{\partopsep}{0pt}
     \setlength{\leftmargin}{2em}
     \setlength{\labelwidth}{1.5em}
     \setlength{\labelsep}{0.5em}
} }
\newcommand{\squishlisttight}{
 \begin{list}{$\bullet$}
  { \setlength{\itemsep}{0pt}
    \setlength{\parsep}{0pt}
    \setlength{\topsep}{0pt}
    \setlength{\partopsep}{0pt}
    \setlength{\leftmargin}{2em}
    \setlength{\labelwidth}{1.5em}
    \setlength{\labelsep}{0.5em}
} }

\newcommand{\squishdesc}{
 \begin{list}{}
  {  \setlength{\itemsep}{0pt}
     \setlength{\parsep}{3pt}
     \setlength{\topsep}{3pt}
     \setlength{\partopsep}{0pt}
     \setlength{\leftmargin}{1em}
     \setlength{\labelwidth}{1.5em}
     \setlength{\labelsep}{0.5em}
} }

\newcommand{\squishend}{
  \end{list}
}

\newcommand{\spara}[1]{\smallskip\noindent{\bf #1}}
\newcommand{\mpara}[1]{\medskip\noindent{\bf #1}}


\newcommand{\fpr}[1]{\mathopen{}\left(#1\right)}

\newcommand{\dispfunc}[2]{%
  \ensuremath{%
    \ifthenelse{\equal{\noexpand#2}{}}%
	     {#1}%
		      {{#1}\fpr{#2}}}}

\newcommand{\integers}{\ensuremath{\mathbb{N}}\xspace}

\DeclareMathAlphabet{\pazocal}{OMS}{zplm}{m}{n}

\newcommand{\bigO}{\ensuremath{\mathcal{O}}\xspace}
\newcommand{\np}{\ensuremath{\mathbf{NP}}\xspace}
\newcommand{\p}{\ensuremath{\mathbf{P}}\xspace}

\newcommand{\fpt}{\ensuremath{\textsf{FPT}}\xspace}

\newcommand{\wone}{\ensuremath{\mathbf{W[1]}}\xspace}
\newcommand{\wtwo}{\ensuremath{\mathbf{W[2]}}\xspace}

\newcommand{\seth}{\ensuremath{\textsf{SETH}}\xspace}
\newcommand{\gapeth}{\ensuremath{\textsf{Gap\text{-}ETH}}\xspace}


\newcommand{\note}[1]{{\textcolor{red}{NOTE: #1}}}

\newcommand{\pc}[1]{{\pazocal{#1}}}
\newcommand{\mc}[1]{{\mathcal{#1}}}

\newcommand{\vertexcover}{\ensuremath{\text{\sc Vertex\-Cover}}\xspace}

\newcommand{\matmedian}{\ensuremath{\text{\sc Matroid\-Med\-ian}}\xspace}
\newcommand{\rbmedian}{\ensuremath{rb\text{\sc-Median}}\xspace}

\newcommand{\dominatingset}{\ensuremath{k\text{\sc -Dom\-Set}}\xspace}

\newcommand{\divkmedian}{\ensuremath{\text{\sc Div-}k\text{\sc-Med}}\xspace}
\newcommand{\divkmediannif}{\ensuremath{\text{\sc Div-}k\text{\sc-Median}_{\emptyset}}\xspace}

\newcommand{\kmedianppm}{\ensuremath{k\text{\sc-Med-}p\text{\sc-PM}}\xspace}
\newcommand{\kmedianpm}{\ensuremath{k\text{\sc-Med-}k\text{\sc-PM}}\xspace}
\newcommand{\kmedian}{\ensuremath{k\text{\sc{-Median}}}\xspace}

\newcommand{\kmeans}{\ensuremath{k\text{\sc{-Means}}}\xspace}
\newcommand{\divkmeans}{\ensuremath{\text{\sc Div-}k\text{\sc-Means}}\xspace}
\newcommand{\reqsat}{\ensuremath{{\vec{r}}\text{\sc{-sat}}}\xspace}

\newcommand{\setmcwr}{{\sc SetMulticover-WR}\xspace}

\newcommand{\lszero}{\ensuremath{\small\text{\sf LS}_0}\xspace}
\newcommand{\lszeropswap}{\ensuremath{\small\text{\sf LS}_0(p)}\xspace}

\newcommand{\lsone}{\ensuremath{\small\text{\sf LS}_1}\xspace}

\newcommand{\KM}{\ensuremath{\small\text{\sf KM}}\xspace}
\newcommand{\DP}{\ensuremath{\small\text{\sf DP}}\xspace}
\newcommand{\LP}{\ensuremath{\small\text{\sf LP}}\xspace}
\newcommand{\ES}{\ensuremath{\small\text{\sf ES}}\xspace}

\newcommand{\cost}{\ensuremath{\text{\sf cost}}\xspace}
\newcommand{\impr}{\ensuremath{\text{\sf improv}}\xspace}
\newcommand{\coreset}{\ensuremath{\text{\sf core-set}}\xspace}

\newcommand{\divkins}{\ensuremath{((U,d),F,C,\mathcal{G},\vec{r},k)}\xspace}
\newcommand{\kmedkpmins}{\ensuremath{((U,d),\{E_1^*,\cdots,E_k^*\},C,k))}\xspace}

%
%

%
%

\newcommand{\poly}{\textsf{poly}\xspace}
\newcommand{\wrt}{\text{w.r.t}\xspace}

\newcommand{\groups}{\mathcal{G}}

\newcommand{\pattern}{constraint pattern\xspace}
\newcommand{\patternset}[1]{E(#1)\xspace}
\newcommand{\charpart}{\mathcal{E}\xspace}
\newcommand{\charvec}{\vec{\chi}\xspace}

\newcommand{\smallball}{\Pi\xspace}

\newcommand{\req}[1]{{\ensuremath{r[#1]}}\xspace}
\newcommand{\reqvec}{{\ensuremath{\vec{r}}}\xspace}

\newcommand{\tikzscale}{{0.7}}


\usepackage{tikz,pgfplots,pgfplotstable}
\usetikzlibrary{positioning,fit,shapes}
\usetikzlibrary{decorations.pathreplacing}
\usetikzlibrary{decorations.markings}
\usetikzlibrary{arrows}

\pgfdeclarelayer{background}
\pgfdeclarelayer{foreground}
\pgfsetlayers{background,main,foreground}

\makeatletter
\tikzset{multicircle/.style  args={#1, #2}{%
 alias=tmp@name, %
  postaction={%
    insert path={
     \pgfextra{%
     \pgfpointdiff{\pgfpointanchor{\pgf@node@name}{center}}%
                  {\pgfpointanchor{\pgf@node@name}{east}}%
     \pgfmathsetmacro\insiderad{\pgf@x}%
        \fill[white] (\pgf@node@name.center)  circle (\insiderad-\pgflinewidth);%
        \draw[#2] (\pgf@node@name.center)  circle (\insiderad-\pgflinewidth);%
        \fill[#2] (\pgf@node@name.center)  -- ++(0:\insiderad-\pgflinewidth) arc (0:#1:\insiderad-\pgflinewidth)--cycle;%
        }}}}}
\makeatother

\definecolor{yafaxiscolor}{rgb}{0.3, 0.3, 0.3}
\definecolor{yafcolor1}{rgb}{0.4, 0.165, 0.553}
\definecolor{yafcolor2}{rgb}{0.949, 0.482, 0.216}
\definecolor{yafcolor3}{rgb}{0.47, 0.549, 0.306}
\definecolor{yafcolor4}{rgb}{0.925, 0.165, 0.224}
\definecolor{yafcolor5}{rgb}{0.141, 0.345, 0.643}
\definecolor{yafcolor6}{rgb}{0.965, 0.933, 0.267}
\definecolor{yafcolor7}{rgb}{0.627, 0.118, 0.165}
\definecolor{yafcolor8}{rgb}{0.878, 0.475, 0.686}
\definecolor{yafcolor9}{rgb}{0.965, 0.733, 0.767}

\newlength{\yafaxispad}
\setlength{\yafaxispad}{-4pt}
\newlength{\yaftlpad}
\setlength{\yaftlpad}{\yafaxispad}
\addtolength{\yaftlpad}{-0pt}
\newlength{\yaflabelpad}
\setlength{\yaflabelpad}{-2pt}
\newlength{\yafaxiswidth}
\setlength{\yafaxiswidth}{1.2pt}
\newlength{\yafticklen}
\setlength{\yafticklen}{2pt}

\makeatletter
\def\pgfplots@drawtickgridlines@INSTALLCLIP@onorientedsurf#1{}
\makeatother

\newcommand{\yafdrawxaxis}[2]{
  \pgfplotstransformcoordinatex{#1}\let\xmincoord=\pgfmathresult 
  \pgfplotstransformcoordinatex{#2}\let\xmaxcoord=\pgfmathresult 
  \pgfsetlinewidth{\yafaxiswidth} 
  \pgfsetcolor{yafaxiscolor}
  \pgfpathmoveto{\pgfpointadd{\pgfpointadd{\pgfplotspointrelaxisxy{0}{0}}{\pgfqpointxy{\xmincoord}{0}}}{\pgfqpoint{-0.5\yafaxiswidth}{\yafaxispad}}}
  \pgfpathlineto{\pgfpointadd{\pgfpointadd{\pgfplotspointrelaxisxy{0}{0}}{\pgfqpointxy{\xmaxcoord}{0}}}{\pgfqpoint{0.5\yafaxiswidth}{\yafaxispad}}}
  \pgfusepath{stroke}

}
\newcommand{\yafdrawyaxis}[2]{
  \pgfplotstransformcoordinatey{#1}\let\ymincoord=\pgfmathresult 
  \pgfplotstransformcoordinatey{#2}\let\ymaxcoord=\pgfmathresult 
  \pgfsetlinewidth{\yafaxiswidth} 
  \pgfsetcolor{yafaxiscolor}
  \pgfpathmoveto{\pgfpointadd{\pgfpointadd{\pgfplotspointrelaxisxy{0}{0}}{\pgfqpointxy{0}{\ymincoord}}}{\pgfqpoint{\yafaxispad}{-0.5\yafaxiswidth}}}
  \pgfpathlineto{\pgfpointadd{\pgfpointadd{\pgfplotspointrelaxisxy{0}{0}}{\pgfqpointxy{0}{\ymaxcoord}}}{\pgfqpoint{\yafaxispad}{0.5\yafaxiswidth}}}
  \pgfusepath{stroke}
}

\pgfplotscreateplotcyclelist{yaf}{%
{yafcolor1,mark options={scale=0.75},mark=o}, 
{yafcolor2,mark options={scale=0.75},mark=square},
{yafcolor3,mark options={scale=0.75},mark=triangle},
{yafcolor4,mark options={scale=0.75},mark=o},
{yafcolor5,mark options={scale=0.75},mark=o},
{yafcolor6,mark options={scale=0.75},mark=o},
{yafcolor7,mark options={scale=0.75},mark=o},
{yafcolor8,mark options={scale=0.75},mark=o}} 

\pgfplotsset{axis y line=left, axis x line=bottom,
  tick align=outside,
  compat = 1.3,
  tickwidth=\yafticklen,
  clip = false,
  every axis title shift = 0pt,
    x axis line style= {-, line width = 0pt, opacity = 0},
    y axis line style= {-, line width = 0pt, opacity = 0},
    x tick style= {line width = \yafaxiswidth, color=yafaxiscolor, yshift = \yafaxispad},
    y tick style= {line width = \yafaxiswidth, color=yafaxiscolor, xshift = \yafaxispad},
    x tick label style = {font=\scriptsize, yshift = \yaftlpad},
    y tick label style = {font=\scriptsize, xshift = \yaftlpad},
    every axis y label/.style = {at = {(ticklabel cs:0.5)}, rotate=90, anchor=center, font=\scriptsize, yshift = -\yaflabelpad},
    every axis x label/.style = {at = {(ticklabel cs:0.5)}, anchor=center, font=\scriptsize, yshift = \yaflabelpad},
    x tick label style = {font=\scriptsize, yshift = 1pt},
    grid = major,
    major grid style  = {dash pattern = on 1pt off 3 pt},
  every axis plot post/.append style= {line width=\yafaxiswidth} ,
  legend cell align = left,
  legend style = {inner sep = 1pt, cells = {font=\scriptsize}},
  legend image code/.code={%
    \draw[mark repeat=2,mark phase=2,#1] 
    plot coordinates { (0cm,0cm) (0.15cm,0cm) (0.3cm,0cm) };%
  } 
}

\tikzset{
  on each segment/.style={
    decorate,
    decoration={
      show path construction,
      moveto code={},
      lineto code={
        \path [#1]
        (\tikzinputsegmentfirst) -- (\tikzinputsegmentlast);
      },
      curveto code={
        \path [#1] (\tikzinputsegmentfirst)
        .. controls
        (\tikzinputsegmentsupporta) and (\tikzinputsegmentsupportb)
        ..
        (\tikzinputsegmentlast);
      },
      closepath code={
        \path [#1]
        (\tikzinputsegmentfirst) -- (\tikzinputsegmentlast);
      },
    },
  },
  mid arrow/.style={postaction={decorate,decoration={
        markings,
        mark=at position .6 with {\arrow[#1]{stealth}}
      }}},
}

\settopmatter{printacmref=false}
\setcopyright{none}
\renewcommand\footnotetextcopyrightpermission[1]{}

\begin{document}

\title{Clustering with fair-center representation:
parameterized approximation algorithms and heuristics}

\author{Suhas Thejaswi}
\affiliation{%
  \institution{Aalto University}
  \city{}
  \country{Finland}}
\email{firstname.lastname@aalto.fi}
\author{Ameet Gadekar}
\affiliation{%
  \institution{Aalto University}
  \city{}
  \country{Finland}}
\email{firstname.lastname@aalto.fi}
\author{Bruno Ordozgoiti}
\affiliation{%
  \institution{Queen Mary University of London}
  \city{}
  \country{United Kingdom}}
\email{b.ordozgoiti@qmul.ac.uk}
\author{Michal Osadnik}
\affiliation{%
  \institution{Aalto University}
  \city{}
  \country{Finland}}
\email{firstname.lastname@aalto.fi}

\begin{abstract}
We study a variant of classical clustering formulations in the context of
algorithmic fairness, known as diversity-aware clustering. In this variant we
are given a collection of facility subsets, and a solution must contain at least
a specified number of facilities from each subset while simultaneously
minimizing the clustering objective ($k$-median or $k$-means).
We investigate the fixed-parameter tractability of these
problems and show several negative hardness and inapproximability results, even
when we afford exponential running time with respect to some parameters.

Motivated by these results we identify natural parameters of the problem, and present fixed-parameter approximation algorithms with approximation ratios 
$\big(1 + \frac{2}{e} +\epsilon \big)$ and 
$\big(1 + \frac{8}{e}+ \epsilon \big)$ for \textit{diversity-aware $k$-median} and
\textit{diversity-aware $k$-means} respectively, and argue that these ratios are essentially tight assuming
the gap-exponential time hypothesis. We also present a simple and more practical
bicriteria approximation algorithm with  better running time
bounds. We finally propose efficient and practical heuristics. We
evaluate the scalability and effectiveness of our methods in a wide
variety of rigorously conducted experiments, on both real and synthetic
data. 

\end{abstract}

\maketitle

\keywords{Algorithmic fairness, Clustering, Fixed parameter tractability,
Parameterized approximation algorithms.}

\begin{acks}
This work is supported by Academy of Finland projects AIDA (317085) and MLDB
(325117), ERC grant under the EU Horizon 2020 research and innovation programme
(759557), Nokia foundation scholarship (20220290). Part of this work was done
while Bruno Ordozgoiti was a postdoctoral researcher at Aalto University.
\end{acks}

\section{Introduction}
Consider the problem of forming a representative committee. In
essence, the task amounts to finding a group of people among a given
set of candidates, to represent the will of a (usually) larger body of
constituents. In computational terms, this can be modeled as a
clustering problem like \kmedian: each cluster center is a chosen
candidate, and the points in the corresponding cluster are the
constituents it best represents. 

In certain scenarios, it may be adequate to consider additional requirements.
For instance, it may be necessary that at least a number of the chosen committee
members belong to a certain minority-ethnic background, to ensure that all
groups in a society are represented in the decision-making process. 
This problem was recently formalized as the \textit{diversity-aware $k$-median}
problem (\divkmedian)~\cite{thejaswi2021diversity}. As in conventional \kmedian,
the goal is to pick $k$ facilities to minimize the sum of distances from clients
to their closest facility~\cite{arya2001local}. In \divkmedian, however, each
facility is associated to an arbitrary number of attributes from a given finite
set. The solution is required to contain at least a certain number of facilities
having each attribute (the requirement for each attribute is given to us as part
of the input).

Thejaswi et al. showed that this additional constraint makes \kmedian harder to
solve~\cite{thejaswi2021diversity}, in the following sense. While \kmedian is
\np-hard to solve exactly, it is \np-complete to even decide whether a
\divkmedian instance has a feasible solution. The rest of their work, thus,
focuses on tractable cases and practical heuristics.
This work follows a recent line of interest in \textit{algorithmic fairness},
which has attracted significant attention in recent years. In the design of fair
algorithms, additional constraints are imposed on the objective function to
ensure equitable ---or otherwise desirable--- outcomes for the different groups
present in the
data~\cite{zafar2017fairness,dwork2018decoupled,chierichetti2017fair,schmidt2019fair,huang2019coresets,backurs2019scalable,bercea2019cost}.

\mpara{Contributions.} In this paper we provide a much more comprehensive analysis of
\divkmedian (\divkmeans resp.), addressing computational complexity,
approximation algorithms, and practical heuristics for the problems. In particular, we
give the first known and tight approximation results for the problems.

Since we know that \divkmedian does not admit polynomial-time
approximation algorithms~\cite{thejaswi2021diversity},
we first focus on \textit{fixed-parameter
  tractability}~\cite{cygan2015parameterized}. That is, we seek to
answer the following question: can we hope to find
algorithms with approximation guarantees by allowing their 
running time to be exponential in some input parameter? In other
words, is approximating \divkmedian (\divkmeans resp.) fixed-parameter tractable (\fpt)?

Our main
result in this paper is a positive answer to this question. We give a constant-factor
approximation algorithm with running time parameterized by $k$ and $t$, the
number of clusters and the number of candidate attributes
respectively. We further develop our understanding of \divkmedian (\divkmeans resp.) by
characterizing the problem in terms of parameterized complexity. Finally, we consider practical aspects of the problem
and design effective, practical heuristics which we evaluate through a
variety of experiments.
Our contributions are summarized below:

\spara{Computational complexity.} We strengthen the known complexity results for
\divkmedian (\divkmeans resp.) by analyzing its parameterized complexity and
inapproximibility. In particular, for these problems,
\squishlisttight
\item we give a lower bound for the running time of optimal and approximation
algorithms (Corollary~\ref{corollary:introduction:1});
\item we show that finding bicriteria approximation algorithms  is
fixed-parameter intractable with respect to the number of clusters
(Proposition~\ref{proposition:introduction:2});
\item we show that they are fixed-parameter intractable with
respect to various choices of parameters (Proposition~\ref{proposition:introduction:3}).
\squishend
\spara{Approximation algorithms.}
\squishlisttight
\item We give the first and tight fixed-parameter
tractable constant-factor approximation algorithm for
\divkmedian (\divkmeans resp.) w.r.t. $k$ and $t$ (Theorem~\ref{theorem:mainfptapx}).
\item We give a faster and tight dynamic programming algorithm for deciding the
feasibility (Theorem~\ref{thm:dpfeasibility}). This yields a faster bicriteria
approximation algorithm (Theorem~\ref{theorem:localsearch}).
\squishend
\spara{Practical heuristics and empirical evaluation.}
\squishlisttight
\item Despite their theoretical guarantees, the methods discussed above are
impractical. We propose a practical approach to find feasible solutions based on
linear programming. Despite its lack of guarantees, we show how it can be used
as a building block in the design of effective heuristics.
\item We evaluate the proposed methods in a wide variety of experimental
results, rigorously conducted on real an synthetic datasets.  In particular, we
show that the proposed heuristics are able to reliably and efficiently
find feasible solutions of good quality on a variety of real data sets.
\squishend

The rest of the paper is organized as follows. In
Section \ref{sec:preliminaries} we introduce notation and basic
notions. In Section \ref{sec:hardness} we present our
 computational complexity analysis, and Section \ref{sec:results} gives an overview of our main results. Our
algorithms are described in Sections \ref{sec:algorithm} and
~\ref{sec:bicri}, and our experimental results in Section
\ref{sec:experiments}. An overview of related work is given in Section
\ref{sec:related}, while Section \ref{sec:conclusions} provides
concluding remarks.
Some proofs are deferred to the
Supplementary.

\section{Preliminaries}
\label{sec:preliminaries}
In this section we introduce notation and problem definitions.

\mpara{Notation.}
\label{sec:preliminaries:notation}
Given a metric space $(U,d)$, a set $C \subseteq U$ of clients, a set $F
\subseteq U$ of facilities and a subset $S \subseteq F$ of facilities, we denote
by $\cost(S)=\sum_{c \in C} d(c,S)$ the clustering cost of $S$, where
$d(c,S)=\min_{s \in S} d(c,s)$. 
We say that $C$ is weighted when every $c \in C$ is associated to a weight $w_c
\in \mathbb{R}$, and the clustering cost becomes
$\cost(S)= \sum_{c \in C} w_c \cdot d(c,S)$.
Similarly, for $C' \subseteq C$ and $S \subseteq F$, we write 
$\cost(C',S)=\sum_{c \in C'} w_c \cdot d(c,S)$.  
Further, given a collection $\mathcal{G}=\{G_1,\dots,G_t\}$ of facility groups
such that $G_i \subseteq F$, for each facility $f \in F$ we denote by
$\vec{\chi}_f \in \{0,1\}^t$ the \emph {characteristic vector} of $f$ with
respect to $\mathcal{G}$, and is defined as
$\vec{\chi}_f[i] = 1$ if $f \in G_i$, $0$ otherwise, for all $i \in [t]$. For
$\eta >0$ and $a \in \mathbb{Z}_{\ge 0}$, we denote $[a]_\eta \in \mathbb{Z}$ as
the smallest integer such that $(1+\eta)^{[a]_\eta} \ge a$.
For a metric space $(U,d)$, the aspect ratio is defined as $\Delta =
\frac{\max_{x,y\in U} d(x,y)}{\min_{x,y\in U} d(x,y)}$.

In this paper we use standard parameterized complexity terminology from 
Cygan et al.~\cite{cygan2015parameterized}.


\begin{definition}[\bf Diversity-aware $k$-median (\divkmedian)]
\label{def:divkmedian}
Given a metric space $(U,d)$, a set $C \subseteq U$ of clients, 
a set $F \subseteq U$ of facilities,
a collection, called \textbf{groups}, $\mathcal{G}=\{G_1,\dots,G_t\}$
of facility sets $G_i
\subseteq F$, a budget $k \leq |F|$, and a vector of
requirements $\vec{r}=(r[1],\dots,r[t])$.
%
The problem asks to find a subset of facilities $S \subseteq F$ of size $k$,
satisfying $|S \cap G_i| \geq r[i]$ for all $i \in [t]$, such that the clustering
cost of $S$, $\cost(S) = \sum_{c \in C} d(c, S)$ is minimized. An instance of
\divkmedian is denoted as $I=\divkins$.
\end{definition}

In {\bf diversity-aware $k$-means} problem (\divkmeans), the clustering cost of
$S \subseteq F$ is $\cost(S) = \sum_{c \in C} d(c,S)^2$.
We denote $r= \max_{i \in [t]} r[i]$ and we assume $\Delta$ is polynomially bounded \wrt $|U|$~\cite{cohen2019tight}.

\section{Hardness}
\label{sec:hardness}
To motivate the choice of parameters for designing \fpt algorithms,
we characterize the hardness of \divkmedian (\divkmeans resp.) based on standard
complexity theory assumptions.
Observe that \divkmedian (\divkmeans resp.) problems are an amalgamation of two independent
problems: ($i$) finding a subset of facilities $S \subseteq F$ of size $|S|=k$
that satisfies the requirements $|S \cap G_i| \geq \req{i}$ for all $i \in [t]$, and
($ii$) minimizing the \kmedian (\kmeans resp.) clustering cost.  To remain consistent with the
problem statement of Thejaswi et al.~\cite{thejaswi2021diversity}, we refer to
subproblem ($i$) as the requirement satisfiability problem (\reqsat), where the cost of clustering is
ignored. If we ignore the requirements in ($i$) we obtain the classical \kmedian
(\kmeans resp.)
formulation, which immediately establishes the \np-hardness of \divkmedian (\divkmeans resp.).

A reduction of the vertex cover problem to \reqsat is sufficient to show
that \divkmedian (\divkmeans resp.) are inapproximable to any multiplicative factor in
polynomial-time even if all the subsets are of size
two~\cite[Theorem~3]{thejaswi2021diversity-arxiv}.
The \wtwo-hardness of \divkmedian (\divkmeans resp.) with respect to parameter $k$ is a consequence
of the fact that \kmedian (\kmeans resp.) are \wtwo-hard, which follow from a reduction by Guha
and Kuller~\cite{guha1998greedy}.
More strongly, combining the result of \cite[Lemma~1]{thejaswi2021diversity}
with the strong exponential time hypothesis (\seth)~\cite{impagliazzo2001on}, we conclude the following: if we only consider the parameter $k$, a
trivial exhaustive-search algorithm is our best hope for finding an optimal, or
even an approximate, solution to \divkmedian (\divkmeans resp.). The proof is
in Supplementary~\ref{app:otherproofs}.
\begin{corollary}
\label{corollary:introduction:1}
Assume \seth. For all $k \geq 3$ and $\epsilon >0$, there exists no 
$\bigO(|{F}|^{k - \epsilon})$
algorithm to solve \divkmedian (\divkmeans resp.).
Furthermore, there exists no
$\bigO(|{F}|^{k - \epsilon})$
algorithm to approximate \divkmedian (\divkmeans resp.) to any multiplicative factor.
\end{corollary}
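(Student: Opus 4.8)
The plan is to push both claims through the feasibility subproblem \reqsat. The key observation is that any algorithm that solves \divkmedian exactly, or that approximates it to \emph{any} multiplicative factor, implicitly decides feasibility: on a feasible instance it must output a size-$k$ subset $S \subseteq F$ meeting every requirement $|S \cap G_i| \ge \req{i}$ (a bound on $\cost(S)$ is vacuous unless such an $S$ is actually returned), whereas on an infeasible instance it can only report that no solution exists. Hence an $\bigO(|F|^{k-\epsilon})$ algorithm for either variant would yield an $\bigO(|F|^{k-\epsilon})$ algorithm for \reqsat, and it suffices to rule the latter out.

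To lower-bound \reqsat I would invoke the reduction of Lemma~1 of Thejaswi et al.~\cite{thejaswi2021diversity}, which (in the spirit of the \wtwo-hardness reduction for \kmedian) encodes a \dominatingset instance on $n$ vertices as a feasibility instance of \divkmedian: the vertices become the facilities (so $|F| = \Theta(n)$), the groups $\mathcal{G}$ and requirements $\vec{r}$ are chosen so that a size-$k$ subset of $F$ satisfies all requirements if and only if the corresponding vertices dominate the graph, and the budget equals the dominating-set parameter $k$. Since this reduction preserves $k$ and inflates $|F|$ by only a constant factor, an $\bigO(|F|^{k-\epsilon})$ algorithm for \reqsat would solve \dominatingset in time $\bigO(n^{k-\epsilon'})$ for some $\epsilon' > 0$.

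The argument is closed by the \seth-based hardness of dominating set: P\u{a}tra\c{s}cu and Williams show that, assuming \seth, for every fixed $k \ge 3$ there is no $\bigO(n^{k-\epsilon})$ algorithm for \dominatingset---which is exactly the origin of the ``$k \ge 3$'' restriction in the statement. Composing this with the reduction above yields the desired $\bigO(|F|^{k-\epsilon})$ lower bound for \reqsat, and the feasibility observation of the first paragraph lifts it simultaneously to the exact and to the any-multiplicative-factor approximate versions of \divkmedian. Since only the cost term differs, the identical chain applies verbatim to \divkmeans.

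The step I expect to need the most care is the transfer to the approximation claim: one must argue cleanly that a multiplicative approximation cannot bypass the feasibility gate, and in particular that no degenerate regime (for instance zero optimal cost, or clients coinciding with facilities) lets an approximate algorithm emit a valid-looking answer without actually exhibiting a feasible $S$. Once feasibility is pinned down as a prerequisite for \emph{any} legitimate output, the exact and approximate bounds both follow from the single \reqsat bound, and the only remaining bookkeeping is to verify that the constant-factor blow-up of $|F|$ in Lemma~1 is absorbed into the $\epsilon$ appearing in the exponent.
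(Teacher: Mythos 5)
Your proposal is correct and follows essentially the same route as the paper: the paper's proof likewise combines the P\u{a}tra\c{s}cu--Williams \seth lower bound for \dominatingset with the reduction of Lemma~1 of Thejaswi et al.\ (which in fact sets $F=V$ exactly, so no constant-factor blow-up needs absorbing), and observes that both exact and any-multiplicative-factor approximate algorithms are ruled out because they must resolve the embedded domination/feasibility question. Your extra care about why an approximation algorithm cannot bypass the feasibility gate is a more explicit spelling-out of what the paper leaves implicit, not a different argument.
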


Given the $\wtwo$-hardness of \divkmedian with respect to parameter $k$, it is
natural to consider relaxations of the problem. An obvious question is
whether we can approximate \divkmedian  in \fpt time \wrt $k$, if we are allowed
to open, say, $f(k)$ facilities instead of $k$, for some function $f$.
Unfortunately, this is also unlikely as \divkmedian captures
\dominatingset~\cite[Lemma~1]{thejaswi2021diversity}, and even finding a
dominating set of size $f(k)$ is \wone-hard~\cite{karthik2019on}. The proof is
in Supplementary~\ref{app:otherproofs}.
\begin{proposition}
\label{proposition:introduction:2}
For any function $f(k)$, finding $f(k)$ facilities that approximate the
\divkmedian (\divkmeans resp.) cost to any multiplicative factor in \fpt time with respect to
parameter $k$ is $\wone$-hard.
\end{proposition}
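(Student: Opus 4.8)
The plan is to reduce from the gap version of \dominatingset, exploiting the fact that the combinatorial core of \divkmedian --- its requirement-satisfiability subproblem \reqsat --- already encodes domination. Concretely, I would invoke the reduction underlying~\cite[Lemma~1]{thejaswi2021diversity}: from a graph $G=(V,E)$ with parameter $k$, build a \divkmedian instance whose facilities are $F=V$, whose groups are the closed neighborhoods $\mathcal{G}=\{N[v]:v\in V\}$, and whose requirements are $\req{v}=1$ for all $v$, with budget $k$. A set $S\subseteq F$ then satisfies $|S\cap N[v]|\ge\req{v}$ for every $v$ exactly when $S$ dominates $G$, so feasible solutions of size $k$ are precisely the dominating sets of size $k$. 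Because the cost is irrelevant to this correspondence, I would take $C=\emptyset$, so that $\cost(S)=0$ for every $S$ and the optimum is $0$; the point of allowing $f(k)$ arbitrary and asking for \emph{any} multiplicative factor is precisely that, with $\mathrm{OPT}=0$, a multiplicative guarantee collapses to a pure feasibility requirement.

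Next I would argue by contradiction. Suppose an algorithm runs in \fpt time in $k$ and always outputs $S$ with $|S|\le f(k)$ that is feasible and satisfies $\cost(S)\le\alpha\cdot\mathrm{OPT}$ for some factor $\alpha$. On the instance above, any feasible output of size at most $f(k)$ is exactly a dominating set of $G$ of size at most $f(k)$. Hence if $G$ has a dominating set of size $k$ --- so the instance is feasible and $\mathrm{OPT}=0$ --- the algorithm returns a dominating set of size $f(k)$; whereas if $G$ has no dominating set of size $f(k)$, no feasible set of that size exists and the algorithm cannot return one. Thus the algorithm would distinguish ``$G$ has a dominating set of size $k$'' from ``$G$ has no dominating set of size $f(k)$'' in \fpt time.

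This is exactly the parameterized gap problem that is \wone-hard for every computable $f$~\cite{karthik2019on}, so no such algorithm can exist unless $\fpt=\wone$. Since the dominating-set parameter $k$ maps to the budget $k$ of the \divkmedian instance, the \fpt-in-$k$ lower bound transfers verbatim, giving the stated \wone-hardness. The \divkmeans case needs no change: squaring the distances leaves the feasibility correspondence untouched, and with $C=\emptyset$ the optimum is again $0$, so the identical reduction applies.

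The step I expect to be most delicate is not the construction but making precise the sense in which a bicriteria algorithm must ``approximate the cost'' here. Because $\mathrm{OPT}=0$, the multiplicative bound carries no information about distances, and its entire force lies in the implicit requirement that the returned set be feasible while using only $f(k)$ facilities; the argument must therefore spell out that any algorithm advertising a finite approximation ratio is obligated to output such a feasible set, after which the hardness is inherited directly from the parameterized inapproximability of \dominatingset.
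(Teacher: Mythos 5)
Your argument is correct and follows essentially the same route as the paper's proof: both reduce to the observation that any algorithm with a finite multiplicative guarantee must output a feasible set, that the requirement constraints encode \dominatingset via the reduction of~\cite[Lemma~1]{thejaswi2021diversity}, and that finding even an $f(k)$-size dominating set is \wone-hard by~\cite{karthik2019on}. You merely spell out the reduction and the ``$\mathrm{OPT}=0$ collapses approximation to feasibility'' point more explicitly than the paper does, which is a fair elaboration rather than a different approach.
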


A possible way forward is to identify other parameters of the problem, to design
\fpt algorithms to solve the problem optimally. As established earlier,
\kmedian is a special case of \divkmedian when $t=1$. This immediately rules out
an exact \fpt algorithm for \divkmedian with respect to parameters $(k,t)$.
Furthermore, we caution the reader against entertaining the prospects of other,
arguably natural, parameters, such as the maximum lower bound $r =
\max_{i\in[t]} r[i]$ (ruled out by the relation $r[i]\leq k$) and the maximum
number of groups a facility can belong to 
$\mu =\max_{f \in F}(|{G_i}: f \in G_i, i \in [t]|)$ 
(ruled out by the relaxation $\mu \leq t$).
\begin{proposition}
\label{proposition:introduction:3}
Finding an optimal solution for \divkmedian  (\divkmeans resp.) is $\wtwo$-hard with respect to parameters
$(k,t)$, $(r,t)$ and $(\mu, t)$.
\end{proposition}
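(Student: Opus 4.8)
The plan is to derive all three hardness results from a single source, the dominating-set problem, which is \wtwo-hard parameterized by the size $\ell$ of the sought solution~\cite{cygan2015parameterized}. The key observation is that the constraint-free case $t=1$ of \divkmedian is exactly \kmedian, which---via the reduction of Guha and Kuller~\cite{guha1998greedy} invoked above---is already \wtwo-hard with respect to $k$. All three reductions will therefore share one and the same cost gadget and differ only in how the (inert) group structure is chosen, so as to make the relevant parameter scale with $\ell$; the relations $\req{i}\le k$ and $\mu\le t$ are precisely what licenses this.

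First I would fix the Guha--Kuller gadget. Given a graph $G=(V,E)$ and integer $\ell$, place one point per vertex, set $d(u,v)=1$ when $uv\in E$ and $d(u,v)=2$ otherwise, take $F=C=V$ and $k=\ell$; then the \kmedian cost is at most $|V|-\ell$ iff $G$ admits a dominating set of size $\ell$. Adjoining the single group $G_1=F$ with $\req{1}=0$ keeps $t=1$ and leaves the optimum unchanged, so $(k,t)=(\ell,1)$ is bounded and \wtwo-hardness with respect to $(k,t)$ follows. For $(r,t)$ I would keep this very instance but set $\req{1}=k$ instead: since $G_1=F$ and $|S|=k$, the constraint $|S\cap G_1|\ge k$ holds automatically, so the optimum is unaffected, while now $r=k=\ell$ scales with the source parameter. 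This is exactly the content of $\req{i}\le k$---the requirement may be pushed up to $k$ for free---so parameterizing by $r$ is no easier than by $k$, giving \wtwo-hardness with respect to $(r,t)$.

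For $(\mu,t)$ I would instead augment the same gadget with $\ell$ identical copies $G_1=\dots=G_\ell=F$ of the full facility set, each carrying a vacuous requirement. These impose no constraint, so the optimum is still $|V|-\ell$ iff a dominating set of size $\ell$ exists; but now $t=\ell$ and every facility lies in all $\ell$ groups, whence $\mu=\ell$. Thus $(\mu,t)=(\ell,\ell)$ is bounded by a function of $\ell$, mirroring $\mu\le t$ and yielding \wtwo-hardness with respect to $(\mu,t)$. Each construction transfers verbatim to \divkmeans after squaring distances, since this preserves the gap between cost $1$ and cost $2$.

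The main hurdle is conceptual rather than computational: it lies in resisting the instinctive encoding that writes one group $N[v]$ per vertex with requirement $1$, turning feasibility (\reqsat) into set cover. That encoding forces $t=|V|$, which is unbounded in $\ell$ and hence useless for any parameterization containing $t$. The essential idea is the opposite one---to confine all combinatorial hardness to the \kmedian cost via Guha--Kuller and to keep the groups inert, using them only to raise $r$ up to $k$ and $\mu,t$ up to $\ell$. What remains, namely checking that inert groups never alter the optimum and that all parameters stay $O(\ell)$, is routine bookkeeping.
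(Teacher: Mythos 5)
Your proof is correct and follows essentially the same route as the paper, which justifies Proposition~\ref{proposition:introduction:3} by observing that the $t=1$ case is \kmedian (hence \wtwo-hard \wrt $k$ via the Guha--Khuller dominating-set reduction) and then invoking the relations $\req{i}\le k$ and $\mu\le t$. Your explicit inert-group gadgets simply instantiate that abstract argument (and are even slightly more than needed, since the single reduction with $t=1$, $\mu\le 1$, $r\le k=\ell$ already bounds all three parameter pairs by a function of $\ell$ at once).
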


The above intractability results thwart our hopes of solving the \divkmedian
problem optimally in \fpt time. We then ask, are there any parameters of the
problem that allow us to find an approximate solution in \fpt time? We answer
this question positively, and present a tight \fpt-approximation algorithm \wrt
$(k,t)$, the number of chosen facilities and the number
of facility groups.

\section{Our results}
\label{sec:results}
Our main result, stated below, shows that a constant-factor approximation of
\divkmedian (\kmeans resp.) can be achieved in \fpt time with respect to parameters $(k,t)$. In
fact, somewhat surprisingly, the factor is the same as the one achievable for
\kmedian (\kmeans resp.). So despite the stark contrast in their polynomial-time
approximability, the \fpt landscape is rather similar for these two problems. We
also note that under the gap-exponential time hypothesis (\gapeth), the
approximation ratio achieved in Theorem~\ref{theorem:mainfptapx} is essentially
tight for any \fpt algorithm \wrt $(k,t)$. This follows from combining the fact
that the case $t=1$ is essentially \kmedian (\kmeans resp.) with the results of Cohen-Addad et
al.~\cite{cohen2019tight}, which assuming \gapeth gives a lower bound for any
\fpt algorithm \wrt $k$. This bound matches their ---and our--- approximation
guarantee.
\begin{theorem} 
\label{theorem:mainfptapx}
For every $\epsilon>0$, there exists a randomized $(1 + \frac{2}{e}
+\epsilon)$-approximation algorithm for \divkmedian with time
time $f(k,t,\epsilon) \cdot \poly(|U|)$, where $f(k,t,\epsilon) =
\bigO\left(\left( \frac{2^t k^3 \log^2 k}{\epsilon^2 \log(1+\epsilon)}\right)^k \right)$. 
Furthermore, the approximation ratio is tight for any \fpt algorithm
\wrt $(k,t)$, assuming \gapeth.
For \divkmeans, with the same running time, we obtain $(1 + \frac{8}{e} +\epsilon)$-approximation, which is tight assuming \gapeth.
\end{theorem}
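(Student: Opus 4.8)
The plan is to adapt the \fpt approximation scheme of Cohen-Addad et al.~\cite{cohen2019tight} for \kmedian, which already attains the $(1+\tfrac{2}{e}+\epsilon)$ factor parameterized by $k$, and to graft onto it a mechanism that respects the requirements $|S\cap G_j|\ge r[j]$. The guiding observation is that the $t=1$ case is essentially \kmedian, so the target ratio and the dependence on $k$ should match theirs; the entire added difficulty is that the group constraints couple the choices of the $k$ centers, which the unconstrained algorithm picks independently per cluster. First I would reduce the client set to a weighted \coreset of size $\poly(k,\epsilon^{-1},\log|U|)$ so that the cost of every candidate $k$-set is $(1\pm\epsilon)$-preserved; since the requirements act on facilities and the \coreset acts on clients, this step is orthogonal to the constraints and can be imported essentially verbatim.

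The heart of the argument is a \emph{type-aware} candidate-center construction. In the unconstrained setting one shows, by sampling from each optimal cluster, that there is a small candidate pool from which choosing the best facility per cluster yields the $(1+\tfrac{2}{e}+\epsilon)$ bound; with requirements we may no longer pick the per-cluster optimum freely. I would therefore keep, for every cluster $i$ and every characteristic vector $\chi\in\{0,1\}^t$, the cheapest sampling-derived candidate of type $\chi$ for that cluster, and additionally guess one type $\chi_i\in\{0,1\}^t$ per cluster. The per-cluster type guess ($2^t$ options over $k$ clusters) contributes the $2^t$ in the base; discretizing each cluster's target cost to powers of $(1+\epsilon)$, using the $[\cdot]_\epsilon$ rounding and the polynomially bounded aspect ratio, contributes the $\log(1+\epsilon)^{-1}$ factor; and the \coreset/sampling sizes contribute the $k^3\log^2 k/\epsilon^2$ factor. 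The algorithm then enumerates one candidate per cluster from these type-annotated lists, $\big(2^t k^3\log^2 k/(\epsilon^2\log(1+\epsilon))\big)^k$ combinations, retains those whose $k$ facilities satisfy $|S\cap G_j|\ge r[j]$ for all $j$, and outputs the cheapest feasible one.

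The step I expect to be the main obstacle is showing that the $(1+\tfrac{2}{e}+\epsilon)$ guarantee survives the type restriction and the coupling. The key point is that the optimal center $c_i^*$ of cluster $i$ has a fixed type $\chi_i^*=\vec{\chi}_{c_i^*}$ and is itself a type-$\chi_i^*$ facility; hence I would re-run the sampling analysis \emph{conditioned} on this type, arguing that the type-restricted candidate pool for cluster $i$ still contains a facility whose cluster-$i$ cost meets the bound against $c_i^*$. Conditioned on guessing $\chi_i=\chi_i^*$ for every cluster, the selected facilities reproduce the optimum's type profile, so $|S\cap G_j|=\sum_{i:\chi_i^*[j]=1}1=|S^*\cap G_j|\ge r[j]$ and feasibility comes for free (distinctness of the chosen facilities is a minor point handled by the final feasibility check). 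I would then union-bound the per-cluster sampling failures over the $k$ clusters and boost the success probability by repetition.

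Finally, tightness needs no new construction: restricting to $t=1$ makes \divkmedian coincide with \kmedian, so the \gapeth-based lower bound of Cohen-Addad et al.~\cite{cohen2019tight} against \fpt algorithms \wrt $k$ applies unchanged and matches our ratio. For \divkmeans the same scheme goes through with squared distances; the only change is that the relaxed (squared) triangle inequality weakens the per-cluster sampling bound, turning the $\tfrac{2}{e}$ term into $\tfrac{8}{e}$ exactly as in the unconstrained \kmeans analysis, and the matching lower bound is likewise inherited from the $t=1$ \kmeans case.
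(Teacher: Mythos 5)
Your skeleton --- coreset the clients, guess one characteristic vector per cluster so that the chosen facilities reproduce the optimum's type profile (whence feasibility), and fall back to the $t=1$ case for tightness --- matches the paper, which enumerates $k$-multisets of the $2^t$ type classes $E(\vec{\gamma})$ and turns each feasible \pattern into a \kmedianpm instance. The gap is in the step that actually produces the $1+\frac{2}{e}$ factor. You assert that in the unconstrained setting there is a small, sampling-derived per-cluster candidate pool from which ``choosing the best facility per cluster'' yields $1+\frac{2}{e}+\epsilon$, and you propose to enumerate one candidate per cluster from type-restricted versions of these pools. That is not how the bound is obtained in~\cite{cohen2019tight}, and it is not clear it can be: their candidate sets $\Pi_i$ are balls around guessed leaders $c_i^*$ at guessed discretized radii $\lambda_i^*$; they are not small (they may contain a constant fraction of $F$), and what is enumerated in time $\bigO((k\epsilon^{-2}\log^2|U|)^k)$ is the set of (leader, radius) pairs, not facilities. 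Picking a single facility from each $\Pi_i$ only certifies $3+\epsilon$ (the paper's Lemma~\ref{lemma:threeapx}), and a per-cluster ``cheapest'' choice is not even well-defined as a certificate for the joint objective, because a client of optimal cluster $i$ may end up served by the facility opened for cluster $j$.

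The $1+\frac{2}{e}$ factor comes from a genuinely joint step: add fictitious facilities $F'$ at distance $2\lambda_i^*$ from every point of $\Pi_i$, define the monotone submodular function $\impr(S)=\cost(C',F')-\cost(C',F'\cup S)$, and maximize it subject to the partition matroid $|S\cap\Pi_i|=1$ using the $(1-1/e)$-approximation of~\cite{calinescu2011maximizing}; combining $\impr(\hat S)\ge(1-1/e)\impr(F^*_\ell)$ with $\cost(C',F')\le(3+2\eta)\cdot\mathrm{OPT}$ gives $1+\frac{2}{e}+\epsilon$ (and the squared analogue gives $1+\frac{8}{e}+\epsilon$). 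Conveniently, this same partition-matroid constraint, with each $\Pi_i$ taken inside the guessed type class $E(\vec{\gamma}_i)$, is exactly what forces one facility per guessed type and hence feasibility of the output --- the constraint handling and the approximation ratio are resolved by one device. Without this submodular-maximization step, or a proof that small per-cluster pools containing a $(1+\frac{2}{e}+\epsilon)$-approximate \emph{joint} solution exist (which you do not supply, and which would be a new result in general metrics), your argument supports only a $3+\epsilon$ guarantee.
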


Finally, in Section~\ref{sec:bicri} we will point out a simple observation: by
relaxing the upper bound on the number of facilities to at most $2k$, we can use
a practical local-search heuristic and obtain a slightly weaker quality
guarantee with better running time bounds.
To achieve this we make use of Theorem~\ref{thm:dpfeasibility}.
\begin{theorem}
\label{theorem:localsearch} 
For every $\epsilon > 0$, there exists a
randomized $(3+\epsilon)$-approximation algorithm that outputs at most $2k$
facilities for the \divkmedian problem in time $\bigO( 2^t(r+1)^t \cdot \poly(|U|,
1/\epsilon))$.
\end{theorem}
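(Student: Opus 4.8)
The plan is to decouple the two aspects of the problem---satisfying the group requirements and controlling the clustering cost---by exploiting the slack of being allowed to open up to $2k$ facilities. First I would invoke Theorem~\ref{thm:dpfeasibility} to obtain, in time $\bigO(2^t(r+1)^t \cdot \poly(|U|))$, a set $B \subseteq F$ with $|B| \le k$ that is \emph{feasible}, i.e.\ $|B \cap G_i| \ge \req{i}$ for every $i \in [t]$ (augmenting the dynamic program with standard backpointers so that an actual witness, not merely a yes/no answer, is recovered; if the program reports infeasibility, the whole instance is infeasible and we halt). The cost of $B$ is irrelevant to what follows: $B$ only needs to certify the requirements.

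Next I would run a \kmedian approximation that ignores the group constraints entirely. Concretely, I would use $p$-swap local search for (unconstrained) \kmedian over the candidate set $F$ with $p = \Theta(1/\epsilon)$, which by the analysis of Arya et al.~\cite{arya2001local} yields a set $A \subseteq F$ with $|A| = k$ and $\cost(A) \le (3 + \epsilon)\,\mathrm{OPT}_{km}$, where $\mathrm{OPT}_{km}$ denotes the optimal unconstrained \kmedian cost. This phase contributes the $\poly(|U|,1/\epsilon)$ factor.

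The output is $A \cup B$, and three short observations finish the argument. First (\emph{size}): $|A \cup B| \le |A| + |B| \le 2k$. Second (\emph{feasibility}): since $B \subseteq A \cup B$, we have $|(A \cup B) \cap G_i| \ge |B \cap G_i| \ge \req{i}$ for all $i$, so $A \cup B$ meets every requirement. Third (\emph{cost}): adding facilities can only decrease the assignment cost, so $\cost(A \cup B) \le \cost(A)$; and because any diversity-feasible solution is in particular a set of $k$ facilities, the constrained optimum $\mathrm{OPT}$ satisfies $\mathrm{OPT}_{km} \le \mathrm{OPT}$. Chaining these gives $\cost(A \cup B) \le \cost(A) \le (3+\epsilon)\,\mathrm{OPT}_{km} \le (3 + \epsilon)\,\mathrm{OPT}$, and the total running time is the sum of the two phases, namely $\bigO(2^t(r+1)^t \cdot \poly(|U|, 1/\epsilon))$.

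Most of the genuine work is hidden inside Theorem~\ref{thm:dpfeasibility}, which supplies the feasible witness within the exponential-in-$t$ budget; the combination argument above is then almost immediate, the only nontrivial ingredients being the monotonicity of cost under adding centers and the inequality $\mathrm{OPT}_{km} \le \mathrm{OPT}$. The closest thing to an obstacle is the running time of the cost phase: $p$-swap local search costs $|U|^{\Theta(1/\epsilon)}$, so the $\poly(|U|, 1/\epsilon)$ claim should be read with $\epsilon$ treated as a constant (or with any polynomial-time constant-factor \kmedian approximation substituted, at the price of a slightly worse constant). Everything else is routine.
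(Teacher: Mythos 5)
Your proposal is correct and follows essentially the same route as the paper: run an unconstrained \kmedian approximation (Arya et al.'s local search) to get a set $A$ of $k$ facilities, use the dynamic program of Theorem~\ref{thm:dpfeasibility} to extract a feasible witness $B$ of at most $k$ facilities, and return $A\cup B$, with the size, feasibility, and cost bounds following exactly as you state. Your closing caveat about the $|U|^{\Theta(1/\epsilon)}$ running time of $p$-swap local search is a fair observation that the paper glosses over as well.
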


\section{Algorithms}
\label{sec:algorithm}
In this section we present an \fpt approximation algorithm for \divkmedian. 
For \divkmeans, the ideas are similar.
Throughout the section, by \fpt we imply \fpt \wrt $(k,t)$.

A birds-eye view of our algorithm (see Algorithm~\ref{algo:divkmed}) is as follows: Given a feasible instance
of \divkmedian, we first carefully enumerate collections of
facility subsets that satisfy the
lower-bound requirements (Section~\ref{sec:algorithm:finding}). 
For each such collection, we obtain a constant-factor approximation of the
optimal cost of the collection (Section~\ref{sec:algoritheorem:fpt}). 
%
Since at least one of these feasible 
solutions is optimal, 
the corresponding approximate solution will be an approximate
solution for the \divkmedian problem. 
A key ingredient for obtaining a constant factor approximation in \fpt
time is to shrink the set of clients. For
this we rely on the notion of coresets (Section ~\ref{sec:algoritheorem:coresets}).


In the exposition to follow, we will refer to the problem of \kmedian with 
$p$-partition matroid constraints:

\begin{definition}[\kmedian with $p$-Partition Matroid (\kmedianppm)]
\label{def:kmedianpm}
Given a metric space $(U,d)$, a set of clients $C \subseteq U$, a set of
facilities $F \subseteq U$ and a collection $\mathcal{E}=\{E_1,\dots,E_p\}$ of
disjoint facility groups called a $p$-{\em partition matroid}.
The problem asks to find a subset of facilities $S \subseteq F$ of size $k$,
containing at most one facility from each group $E_i$, so that the clustering
cost of $S$, $\cost(S)=\sum_{c\in C}d(c,S)$ is minimized. An instance of
\kmedianppm is specified as $J=((U,d),F,C,\mathcal{E},k)$.
\end{definition}

\subsection{Finding feasible {\pattern}s}
\label{sec:algorithm:finding}
We start by defining the concept of {\pattern}.
Given an instance $I=\divkins$ of \divkmedian, where
$\mathcal{G}=\{G_1,\dots,G_t\}$, consider the set  $\{\charvec_f\}_{f\in F}$ of the characteristic vectors of $F$.
For each $\vec{\gamma} \in \{0,1\}^t$, let 
$\patternset{\vec{\gamma}} = \{f \in F: {\charvec}_f=\vec{\gamma}\}$ 
denote the set of all facilities with
characteristic vector $\vec{\gamma}$.  Finally, let $\charpart=\{E(\vec{\gamma}):
\vec{\gamma} \in \{0,1\}^t\}$.
Note that $\charpart$ induces a partition on $F$.

Given a $k$-multiset
$\pazocal{E}=\{E(\vec{\gamma}_{i_1}),\dots,E(\vec{\gamma}_{i_k})\}$, where each $E(\vec{\gamma}_{i_j})
\in \charpart$,
the \textit{\pattern} associated with $\pazocal{E}$ is the vector obtained by the
element-wise sum of the characteristic vectors 
$\{\vec{\gamma}_{i_1},\dots,\vec{\gamma}_{i_k}\}$, that is, 
$\sum_{j \in [k]} \vec{\gamma}_{i_j}$.
A \pattern is said to be \textit{feasible} if 
$\sum_{j \in [k]} \vec{\gamma}_{i_j} \geq \vec r$, 
where the inequality is taken element-wise.

\begin{lemma} 
\label{lemma:feasiblecp}
Given an instance $I=\divkins$ of \divkmedian, we can enumerate all the $k$-multisets with feasible constraint pattern in time $\bigO(2^{tk}t |U|)$. 
\end{lemma}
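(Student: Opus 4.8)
The plan is to turn the enumeration into a bounded-depth recursion over the partition $\charpart$, and then to control both the number of multisets produced and the work spent on each one. First I would preprocess the instance to expose the relevant combinatorial structure: scanning the $t$ groups once per facility computes every characteristic vector $\charvec_f$ in $\bigO(t)$ time, so the whole collection $\{\charvec_f\}_{f \in F}$ is obtained in $\bigO(|F| t) = \bigO(|U| t)$ time, and bucketing the facilities by their characteristic vector then yields the partition $\charpart$. Let $D \subseteq \{0,1\}^t$ be the set of vectors $\vec{\gamma}$ with $\patternset{\vec{\gamma}} \neq \emptyset$, since only non-empty parts can ever contribute a facility. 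As distinct non-empty parts carry distinct characteristic vectors, $|D| \le 2^t$.

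Next I would enumerate the feasible $k$-multisets over $D$ by a depth-$k$ recursion. Fixing an arbitrary total order on $D$, at each level I branch only over parts that are not smaller than the one chosen at the previous level, so that every $k$-multiset is visited exactly once as a non-decreasing tuple $(\vec{\gamma}_{i_1}, \ldots, \vec{\gamma}_{i_k})$. Along each partial tuple I carry the running element-wise sum $\sum_j \vec{\gamma}_{i_j}$, which is updated in $\bigO(t)$ time whenever a part is appended; at a leaf I compare the accumulated pattern against $\reqvec$ coordinatewise in $\bigO(t)$ time and emit the multiset precisely when $\sum_{j \in [k]} \vec{\gamma}_{i_j} \ge \reqvec$. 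This outputs exactly the $k$-multisets with feasible constraint pattern, each exactly once.

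For the running time, every multiset injects into the set of ordered $k$-tuples over $D$, so the number of leaves is at most $|D|^k \le (2^t)^k = 2^{tk}$; counting internal nodes level by level, the recursion tree has $\bigO(k \cdot 2^{tk})$ nodes, and each node costs $\bigO(t)$. Using $k \le |F| \le |U|$, the enumeration runs in $\bigO(2^{tk} t |U|)$ time, which also absorbs the $\bigO(|U| t)$ preprocessing and matches the claimed bound.

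The step I expect to be the main obstacle is the combinatorial accounting: one must argue that restricting from all ordered tuples to the canonical non-decreasing ones (so each multiset is produced exactly once) still enjoys the clean $2^{tk}$ bound, while simultaneously keeping the per-node update of the running pattern to $\bigO(t)$ so that the depth-$k$ recursion does not accumulate extra factors beyond what the $|U|$ slack can absorb. The preprocessing and the coordinatewise feasibility test against $\reqvec$ are routine bookkeeping.
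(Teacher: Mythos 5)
Your proposal is correct and follows essentially the same route as the paper: build the partition $\charpart$ in $\bigO(2^t|F|)$ (you do it in $\bigO(|U|t)$ via bucketing, which is equivalent), bound the number of $k$-multisets of the at most $2^t$ parts by $2^{tk}$, and spend $\bigO(t)$ per multiset to test the element-wise sum against $\reqvec$, with the extra factor of $k$ from your recursion tree absorbed by $k \le |U|$. The non-decreasing-tuple bookkeeping you worry about is the standard way to enumerate multisets without repetition and introduces no difficulty.
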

\begin{proof}
There are $|\charpart|+k-1 \choose k$
possible $k$-multisets of $\charpart$, so enumerating all feasible {\pattern}s
can be done in $\bigO(|\charpart|^k t)$ time.
Further, the enumeration of $\charpart$ itself can be done in time $\bigO(2^t|F|)$, since $|\charpart| \leq 2^t$. Hence,
time complexity of enumerating all feasible {\pattern}s
is $\bigO(2^{tk} t |U|)$.
\end{proof}
Observe that for every $k$ multiset
$\pazocal{E}=\{E(\vec{\gamma}_{i_1}),\dots,E(\vec{\gamma}_{i_k})\}$ with a
feasible  \pattern, picking an arbitrary facility from each
$E(\vec{\gamma}_{i_j})$ yields a feasible solution to the \divkmedian instance $I$.

%
\subsection{Coresets}
\label{sec:algoritheorem:coresets}
Our algorithm relies on the notion of coresets. The high-level idea
is to reduce the number of clients  such that
the distortion in the sum of distances is bounded to a multiplicative factor 
$(1 \pm \nu)$, for
some $\nu > 0$. Given an instance $((U,d),C,F,k)$ of
the \kmedianpm problem, for every $\nu > 0$ we can reduce the number of
clients in $C$ to a weighted set $C'$ of size $|C'|= \bigO({\nu}^{-2}k \log|U|)$.
We make use of the coreset construction for \kmedian by Feldman and
Langberg~\cite{feldman2011unified} and extend the approach to \kmedianpm. To our knowledge, this is the best-known framework for constructing
coresets.
\begin{definition}[Coreset]
Given an instance $I=((U,d),C,F,k)$ of \kmedian and a constant $\nu >0$, a
(strong) {coreset} is a subset of clients $C' \subseteq C$ with associated weights
$\{w_c: c \in C'\}$ such that for any subset of facilities $S \subseteq F$ of size $|S|=k$
it holds that
$$
(1-\nu) \cdot \sum_{c \in C} d(c,S) \le \sum_{c \in C'} w_c \cdot d(c,S) \le
(1+\nu)\cdot \sum_{c \in C} d(c,S)
$$
\end{definition}

\begin{theorem}[\cite{feldman2011unified}, Theorem~4.9]
\label{theorem:coreset}
Given a metric instance  $I=((U,d),C,F,k)$ of the \kmedian problem,
for each $\nu > 0$, $\delta<\frac{1}{2}$,
there exists a randomized algorithm that, with probability at least $1-\delta$,
computes
a coreset $C' \subseteq C$ of
size $|C'| = \bigO({\nu}^{-2}(k \log|U| + \log \frac{1}{\delta}))$ in time
$\bigO(k(|U| + k)+\log^2\frac{1}{\delta}\log^2|U|)$. For \kmeans, with the same runtime, it yields a coreset of size $|C'| = \bigO({\nu}^{-4}(k \log|U| + \log \frac{1}{\delta}))$.
\end{theorem}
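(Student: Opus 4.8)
The plan is to establish the result via the \emph{sensitivity sampling} framework, which reduces coreset construction to importance sampling where the sampling probabilities are governed by how much each client can contribute to the objective of any candidate solution. For a client $c \in C$, define its sensitivity as $\sigma(c) = \sup_{S} d(c,S)/\cost(S)$, where the supremum ranges over all size-$k$ subsets $S \subseteq F$, and let $\mathfrak{S} = \sum_{c\in C}\sigma(c)$ be the total sensitivity. First I would draw $m$ clients independently, each chosen with probability $p_c \propto \bar\sigma(c)$ for an efficiently computable upper bound $\bar\sigma(c)$ on $\sigma(c)$, and assign the sampled client weight $w_c = 1/(m\,p_c)$. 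This yields an unbiased estimator: for every fixed $S$, $\mathbb{E}\big[\sum_{c\in C'} w_c\, d(c,S)\big] = \cost(S)$, so the whole task becomes one of proving uniform concentration of this estimator over all candidate solutions simultaneously.

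Two quantitative ingredients drive the sample size. First I would bound the total sensitivity by $\mathfrak{S} = \bigO(k)$: compute a constant-factor (or bicriteria) approximate solution in near-linear time, charge each client to its nearest approximate center, and use the triangle inequality to show that each point's sensitivity is dominated by its own cost plus the average cost in its assigned cluster; summing telescopes to $\bigO(k)$. Second I would control the complexity of the query space $\{\,c \mapsto d(c,S) : |S|=k\,\}$: since each center is one of at most $|U|$ points, there are at most $|U|^k$ distinct solutions, so a net/union-bound argument over this space contributes the $k\log|U|$ term, while the confidence parameter contributes $\log\frac{1}{\delta}$. Feeding $\mathfrak{S} = \bigO(k)$ and this combinatorial dimension into a Bernstein-type uniform deviation bound gives, with probability $1-\delta$, a $(1\pm\nu)$-coreset of size $|C'| = \bigO(\nu^{-2}(k\log|U| + \log\frac{1}{\delta}))$, and the stated running time follows from the cost of computing the approximate solution and drawing the samples.

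The hard part is the uniform concentration step: one cannot union-bound naively, because the estimator must be accurate for \emph{all} candidate solutions $S$ at once, and the individual terms $w_c\, d(c,S)$ can be large relative to the total. This is precisely what the sensitivity weighting is designed to tame---it flattens the relative contributions so that a Bernstein/chaining argument over the pseudo-dimension of the query class goes through. For \divkmeans, the only structural change is that distances are replaced by their squares; I would redo the sensitivity bound (still $\bigO(k)$ total) and the concentration analysis, but the squared objective inflates the variance of each sampled term relative to the mean, so the uniform deviation bound requires quadratically more samples in $1/\nu$, yielding the $\nu^{-4}$ dependence stated for \kmeans.
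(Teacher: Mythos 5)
This statement is not proved in the paper at all: it is imported verbatim as Theorem~4.9 of Feldman and Langberg \cite{feldman2011unified}, so there is no in-paper proof to compare against. Your sketch correctly reconstructs the strategy of the cited source --- sensitivity (importance) sampling with probabilities proportional to upper bounds on $\sigma(c)=\sup_S d(c,S)/\cost(S)$, a total-sensitivity bound $\mathfrak{S}=\bigO(k)$ obtained by charging against a constant-factor approximate solution, and a uniform deviation bound over the query space of size-$k$ solutions. That is exactly the framework of \cite{feldman2011unified}, so in that sense you are on the paper's (i.e., the citation's) route.

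One quantitative step is glossed over and, as written, would not deliver the stated bound. If you combine $\mathfrak{S}=\bigO(k)$ with a naive Bernstein-plus-union-bound over the $|U|^k$ candidate solutions, the sample size you get is of order $\nu^{-2}\,\mathfrak{S}\,\bigl(k\log|U|+\log\frac{1}{\delta}\bigr)=\bigO\bigl(\nu^{-2}k^{2}\log|U|\bigr)$ --- the total sensitivity and the dimension term multiply, not add. Obtaining the advertised $\bigO\bigl(\nu^{-2}(k\log|U|+\log\frac{1}{\delta})\bigr)$ is precisely the technical contribution of Feldman--Langberg: one must pass to the weighted range space induced by the sensitivity-scaled functions and invoke the $\varepsilon$-approximation machinery (bounding the dimension of that induced function family), rather than union-bounding over solutions directly. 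Your explanation of the $\nu^{-4}$ for \kmeans is likewise heuristic; the accepted reason is that squared distances satisfy only an approximate triangle inequality with a $\nu$-dependent slack, which costs an extra $\nu^{-2}$ in the net/deviation argument --- your ``variance inflation'' intuition points in the right direction but is not by itself a proof. Since the theorem is used here as a black box, these gaps do not affect the paper, but they are the parts you would actually have to supply to make the sketch a proof.
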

Observe that the coresets obtained from the above theorem are also  coresets for \divkmedian and \divkmeans resp., as the corresponding objective remain same.

\subsection{\fpt approximation algorithms}
\label{sec:algoritheorem:fpt}
In this section we present our main result. We will first give an
intuitive overview of our algorithm. As a warm-up, we will describe a
simple $(3+\epsilon)$-\fpt-approximation algorithm. Then, we will show how
to obtain a better guarantee, leveraging the recent \fpt-approximation techniques of \kmedian.
\paragraph*{Intuition}
Given an instance $\divkins$ of \divkmedian, we
first partition the facility set $F$ into at most $2^t$ subsets $\mathcal{E} =
\{E(\vec{\gamma}): \vec{\gamma} \in \{0,1\}^t\}$, such that each
subset $E(\vec{\gamma})$ corresponds to the facilities with
characteristic vector same as $\vec\gamma \in \{0,1\}^t$. Then, using
Lemma~\ref{lemma:feasiblecp}, we enumerate all $k$-multisets of $\mathcal{E}$ with
feasible \pattern. For each such $k$-multiset $\pazocal{E}=\{E(\vec{\gamma}_{1}),\dots,E(\vec{\gamma}_{k})\}$, we generate an instance $J_\pazocal{E}=((U,d), \{E(\vec{\gamma}_{1}),\dots,E(\vec{\gamma}_{k})\},C',k)$ of \kmedianpm,
resulting in at most $\bigO(2^{tk}t|U|)$ instances.
Next, using Theorem~\ref{theorem:coreset}, we build a coreset $C'\subseteq C$ of clients. 
In our final step, we obtain
an approximate solution to each \kmedianpm instance by adapting the techniques
from~\cite{cohen2019tight}, which we discuss next.

Let $\pazocal{E}=\{E(\vec{\gamma}_{1}),\dots,E(\vec{\gamma}_{k})\}$  be a $k$-multiset of $\charpart$ with a feasible constraint pattern,
and let $J_\pazocal{E}$ be the corresponding feasible \kmedianpm
instance. 
Let $\tilde{F}^* = \{\tilde{f}_i^* \in E(\vec{\gamma}_{i})\}_{i \in [k]}$ be an
optimal solution of $J_\pazocal{E}$. For each $\tilde{f}^*_i$, 
let $\tilde{c}^*_i \in C'$ be a closest client, with $d(\tilde{f}^*_i,\tilde{c}^*_i) = \tilde{\lambda}^*_i$. 
Next, for each $\tilde{c}^*_i$ and $\tilde{\lambda}^*_i$, let $\tilde{\Pi}^*_i \subseteq
E(\vec{\gamma}_{i})$ be the set of facilities $f\in
E(\vec{\gamma}_{i})$ such that  $d(f,\tilde{c}^*_i) = \tilde{\lambda}^*_i$. Let us call $\tilde{c}^*_i$ and $\tilde{\lambda}^*_i$ as the leader and radius of $\tilde{\Pi}^*_i$, respectively.
 Observe
that, for each $i \in [k]$,  $\tilde{\Pi}^*_i$ contains  $\tilde{f}^*_i$. Thus, if only we knew $\tilde{c}^*_i$ and
$\tilde{\lambda}^*_i$ for all $i \in [k]$, we would be able to obtain a provably good solution.

To find the closest client $\tilde{c}^*_i$ and its corresponding distance $\tilde{\lambda}^*_i$ in \fpt time, we employ techniques of Cohen-Addad et al.~\cite{cohen2019tight}, which they build on the work of Feldman and Langberg~\cite{feldman2011unified}. 
The idea is to reduce the search spaces small enough so as to allow brute-force search in \fpt time. 
To this end, first, note that, we already have a smaller client set, $|C'| = O(k\nu^{-2}\log |U|)$, since $C'$ is a client coreset.
Hence, to find $\{\tilde{c}^*_i\}_{i \in [k]}$, we enumerate all ordered $k$-multisets of $C'$ resulting in $\bigO((k\nu^{-2}\log |U|)^k)$
time.
Then, to bound the search space of $\lambda^*_i$ (which is at most $\Delta = \poly(|U|)$), we discretize the interval $[1,\Delta]$ to
$[[\Delta]_\eta]$, for some $\eta >0$. Note that $[\Delta]_\eta \le
\lceil \log_{1+\eta} \Delta\rceil = O(\log |U|)$. Hence, enumerating all
ordered $k$-multisets of $[[\Delta]_\eta]$, we spend at most
$\bigO(\log^k |U|)$ time. Thus, the total time for this step, guaranteeing $\tilde{c}^*_i$ and
$\tilde{\lambda}^*_i$ in some enumeration, is $\bigO((k\nu^{-2}\log^2 |U|)^k)$, which is \fpt.

Next, using the facilities in $\{\tilde{\Pi}^*_i\}_{i \in[k]}$, we find an approximate
solution for the instance $J_\pazocal{E}$.
As a warm-up, we show in Lemma~\ref{lemma:threeapx}
that picking exactly one facility from each $\tilde{\Pi}^*_i$ arbitrarily already
gives a $(3+\epsilon)$ approximate solution.
Finally, in Lemma~\ref{lemma:partition}, we obtain a better
approximation ratio by modeling the \kmedianpm problem as a problem of
maximizing a monotone
submodular function, relying on the ideas of Cohen-Addad et
al~\cite{cohen2019tight}.

\begin{lemma} 
\label{lemma:threeapx}
For every $\epsilon>0$, there exists a randomized $(3+\epsilon)$-approximation
algorithm for the \divkmedian problem which runs in time $f(k,t,\epsilon) \cdot
\poly(|U|)$, where
$f(k,t,\epsilon)=\bigO((2^t \epsilon^{-2} k^2 \log k)^k)$.
\end{lemma}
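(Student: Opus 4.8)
The plan is to analyze the procedure sketched in the Intuition: using Lemma~\ref{lemma:feasiblecp}, enumerate every $k$-multiset $\pazocal{E}=\{E(\vec{\gamma}_1),\dots,E(\vec{\gamma}_k)\}$ of $\charpart$ with a feasible \pattern; for each, build the \kmedianpm instance $J_{\pazocal{E}}$ over a coreset $C'$ obtained from Theorem~\ref{theorem:coreset}; approximately solve $J_{\pazocal{E}}$ by guessing, for each part $E(\vec{\gamma}_i)$, the coreset client that acts as the \emph{leader} (the client of $C'$ nearest to the unknown optimal facility $\tilde{f}^*_i$); and finally return the candidate of smallest $\cost(C',\cdot)$. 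By the remark following Lemma~\ref{lemma:feasiblecp}, picking one facility per part of a feasible multiset already yields a \divkmedian-feasible solution, so every candidate is feasible and it suffices to show that some candidate has cost at most $(3+\epsilon)\,\mathrm{OPT}$.

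The core is a two-step triangle-inequality estimate. Fix $\pazocal{E}$, let $\tilde{F}^*=\{\tilde{f}^*_i\}_{i\in[k]}$ be an optimal solution of $J_{\pazocal{E}}$, let $\tilde{c}^*_i\in C'$ be the coreset client nearest to $\tilde{f}^*_i$, and set $\tilde{\lambda}^*_i=d(\tilde{f}^*_i,\tilde{c}^*_i)$. When the algorithm guesses the leaders $\tilde{c}^*_i$ correctly, the selection $S=\{f_i\}_i$ with $f_i\in E(\vec{\gamma}_i)$ nearest to $\tilde{c}^*_i$ satisfies $d(f_i,\tilde{c}^*_i)\le\tilde{\lambda}^*_i$ (since $\tilde{f}^*_i\in E(\vec{\gamma}_i)$ is one candidate). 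For any coreset client $c$ served by $\tilde{f}^*_i$ in $\tilde{F}^*$,
\[
d(c,f_i)\;\le\; d(c,\tilde{c}^*_i)+d(\tilde{c}^*_i,f_i)\;\le\;\bigl(d(c,\tilde{f}^*_i)+\tilde{\lambda}^*_i\bigr)+\tilde{\lambda}^*_i.
\]
Because $\tilde{c}^*_i$ is the nearest coreset client to $\tilde{f}^*_i$ and $c\in C'$, we have $\tilde{\lambda}^*_i\le d(\tilde{f}^*_i,c)$, hence $d(c,f_i)\le 3\,d(c,\tilde{f}^*_i)$. Summing over the weighted coreset gives $\cost(C',S)\le 3\,\cost(C',\tilde{F}^*)$. (Guessing a discretized radius and taking any facility of the corresponding ball $\tilde{\Pi}^*_i$ yields the same bound up to a factor $3+\bigO(\eta)$.)

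To certify optimality and turn $3$ into $3+\epsilon$, I would instantiate this on the multiset $\pazocal{E}_{\mathrm{OPT}}$ induced by an optimal \divkmedian solution $\mathrm{OPT}$; since $\mathrm{OPT}$ is feasible for $J_{\pazocal{E}_{\mathrm{OPT}}}$, optimality of $\tilde{F}^*$ gives $\cost(C',\tilde{F}^*)\le\cost(C',\mathrm{OPT})$. Chaining the strong-coreset bounds $\cost(C',\mathrm{OPT})\le(1+\nu)\cost(C,\mathrm{OPT})$ and $\cost(C,S)\le(1-\nu)^{-1}\cost(C',S)$ yields $\cost(C,S)\le 3\frac{1+\nu}{1-\nu}\,\mathrm{OPT}\le(3+\epsilon)\,\mathrm{OPT}$ for $\nu=\Theta(\epsilon)$; as the output minimizes $\cost(C',\cdot)$ it is no worse than this $S$, and the guarantee holds with constant probability from the coreset's failure probability $\delta$. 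For the running time, Lemma~\ref{lemma:feasiblecp} bounds the multisets by $\bigO(2^{tk})$, guessing one leader per part costs $|C'|^k$ with $|C'|=\bigO(\nu^{-2}k\log|U|)$, and all remaining work is $\poly(|U|)$; the one nonroutine step is absorbing the polylogarithmic factors via $(\log|U|)^{k}=\bigO(k^{k}\poly(|U|))$ (from $k\log\log|U|\le k\log k+\bigO(\log|U|)$), which collapses $2^{tk}|C'|^k$ into $\bigO\!\bigl((2^t\epsilon^{-2}k^2\log k)^k\bigr)\cdot\poly(|U|)$. The main obstacle is the approximation-ratio bookkeeping of the last two paragraphs---extracting exactly the factor $3$ from the two triangle inequalities and then threading the coreset distortion and the optimality transfer so the final ratio is $3+\epsilon$; the enumeration and its \fpt running time are comparatively routine.
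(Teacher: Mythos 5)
Your proposal is correct and follows essentially the same route as the paper's proof: enumerate feasible constraint patterns, pass to a coreset, guess the leader clients, and apply the same two triangle inequalities (using that the leader is the nearest coreset client to the optimal facility) to extract the factor $3$, then absorb the coreset distortion into $\epsilon$. Your variant of selecting the facility nearest to each guessed leader rather than guessing a discretized radius is a harmless simplification, and you spell out the coreset/optimality bookkeeping that the paper defers to a citation.
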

\begin{proof} Let $I=\divkins$ be an instance of \divkmedian.
Let  $J=((U,d),C, \{E_1^*,\dots,E_k^*\},k)$ be an instance of \kmedianpm corresponding
to an optimal solution of $I$. That is, for some optimal solution $F^*=\{f_1^*, \dots,
f_k^*\}$ of $I$, we have $f_j^*\in E_j^*$. Let $c^*_j \in C'$ be the closest client to $f^*_j$, for $j \in [k]$, with $d(f^*_j,c^*_j) = \lambda_j$.
Now, consider the enumeration iteration where leader set is $\{c^*_j\}_{j \in [k]}$ and the radii is $\{\lambda_j^*\}$. 
The construction is illustrated in
Figure~\ref{fig:mainfptapx}.

We define $\Pi^*_i$ to be the set of facilities in $E(\vec{\gamma}_i^*)$
at a distance of at most $\lambda_i^*$ from $c_i^*$. 
We will now argue that picking one arbitrary facility from each $\Pi^*_i$ gives a
$3$-approximation with respect to an optimal pick.
%
Let $C^*_j \subseteq C'$ be a set of clients assigned to each facility $f^*_j$
in optimal solution. Let $\{f_1,\dots,f_k\}$ be the  arbitrarily chosen
facilities, such that $f_j \in \Pi^*_j$. Then for any $c \in C_j$
\[
d(c,f_j) \leq d(c,f_j^*) + d(f_j^*,c_j^*) + d(c_j^*,f_j)).
\]

By the choice of $c_j^*$ we have 
$d(f_j^*,c_j^*) + d(c_j^*,f_j)) \leq 2 \lambda_j^* \leq 2 d(c,f_j^*)$, 
which implies
$\sum_{c \in C_j} d(c,f_j) \leq 3 \sum_{c \in C_j} d(c,f_j^*).$
By the properties of the coreset and bounded discretization
error~\cite{cohen2019tight}, we obtain the approximation stated in the lemma.
\end{proof}

\begin{figure}
\centering
\scalebox{0.6}{\begin{tikzpicture}[scale=\tikzscale,every node/.style={scale=\tikzscale}]

\input{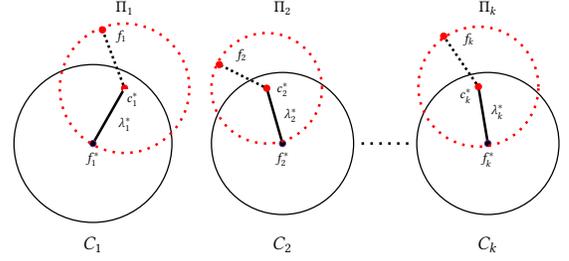}


\node[fcircle, minimum width=5cm] (c1) at (0,0) {$$};
\node[point1] (f1o) at (0,0) {$$};
\node[fill=white] at (0,-0.5) {\LARGE \bf $f_1^*$};
\node[fill=white] at (0,-3.2) {\Huge \bf $C_1$};

\node[fill=white] at (1,4.3) {\huge \bf $\smallball_1$};
\node[dcircle, minimum width=4.1cm] (c1) at (1,1.75) {$$};
\node[point2] (c1) at (1,1.75) {$$};
\node[fill=white] at (1.25,1.4) {\LARGE \bf $c_1^*$};
\node[point2] (f1) at (0.3,3.6) {$$};
\node[fill=white] at (0.9,3.4) {\LARGE \bf $f_1$};

\draw (f1o) edge[exedge] (c1);
\node[fill=white] at (1,0.6) {\LARGE \bf $\lambda_1^*$};
\draw (c1) edge[exedge, dotted] (f1);
\node[fcircle, minimum width=4.5cm] (c2) at (6,0) {$$};
\node[point1] (f2o) at (6,0) {$$};
\node[fill=white] at (6,-0.5) {\LARGE \bf $f_2^*$};
\node[fill=white] at (6,-3.2) {\Huge \bf $C_2$};

\node[fill=white] at (6,4.3) {\huge \bf $\smallball_2$};
\node[dcircle, minimum width=3.5cm] (c1) at (5.5,1.75) {$$};
\node[point2] (c2) at (5.5,1.75) {$$};
\node[fill=white] at (6,1.7) {\LARGE \bf $c_2^*$};
\node[point2] (f2) at (4.0,2.5) {$$};
\node[fill=white] at (4.7,2.8) {\LARGE \bf $f_2$};

\draw (f2o) edge[exedge] (c2);
\node[fill=white] at (6.25,0.9) {\LARGE \bf $\lambda_2^*$};
\draw (c2) edge[exedge, dotted] (f2);

\node[fcircle, minimum width=4.5cm] (c2) at (12.5,0) {$$};
\node[point1] (fko) at (12.5,0) {$$};
\node[fill=white] at (12.5,-0.5) {\LARGE \bf $f_k^*$};
\node[fill=white] at (12.5,-3.2) {\Huge \bf $C_k$};

\node[fill=white] at (12.5,4.3) {\huge \bf $\smallball_k$};
\node[dcircle, minimum width=3.8cm] (c1) at (12.2,1.8) {$$};
\node[point2] (ck) at (12.2,1.8) {$$};
\node[fill=white] at (11.8,1.5) {\LARGE \bf $c_k^*$};
\node[point2] (fk) at (11.1,3.4) {$$};
\node[fill=white] at (11.9,3.3) {\LARGE \bf $f_k$};

\draw (fko) edge[exedge] (ck);
\node[fill=white] at (12.8,1.0) {\LARGE \bf $\lambda_k^*$};
\draw (ck) edge[exedge, dotted] (fk);

\draw[loosely dotted, ultra thick, draw=black] (8.5,0) -- (10,0);
\end{tikzpicture}
\caption{An illustration of facility selection for \fpt algorithm for solving
\kmedianpm instance.
}
\label{fig:mainfptapx}
\end{figure}

We will now focus on our main result, stated in
Theorem~\ref{theorem:mainfptapx}. As mentioned before, we build upon the
ideas for \kmedian of Cohen-Addad et al. of~\cite{cohen2019tight}. 
Their algorithm, however, does not apply
directly to our setting, as we have to ensure that the chosen
facilities satisfy the constraints.

A key observation is that by relying on the partition-matroid
constraint of the auxiliary submodular optimization problem, we can ensure that the
output solution will satisfy the \pattern. Since at least one \pattern
contains an optimal solution, we obtain the advertised approximation factor.

In the following lemma, we argue that this is indeed the case. Next, we
will provide an analysis of the running time of the algorithm. This
will complete the proof of Theorem~\ref{theorem:mainfptapx}.
\begin{lemma} 
\label{lemma:partition}
Let  $I=\divkins$ be an instance of \divkmedian to Algorithm~\ref{algo:divkmed}
and  $F^*=\{f_1^*, \dots, f_k^*\}$ be an optimal solution of $I$.
Let $J=((U,d), \{E_1^*,\cdots,E_k^*\},C',k)$ be an instance of \kmedianpm
corresponding to $F^*$, i.e,$f_i^* \in E^*_i, i \in [k]$.
On input $(J,\epsilon')$,
Algorithm~\ref{algo:kmedpm} outputs a set $\hat{S}$ satisfying $\cost(\hat{S})
\leq (1+\frac{2}{e}+\epsilon)\cost(F^*)$. Similarly, for \divkmeans, $\cost(\hat{S})
\leq (1+\frac{8}{e}+\epsilon)\cost(F^*)$.
\end{lemma}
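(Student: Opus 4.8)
The plan is to prove the bound by modeling the optimization over the partition matroid $\{E_1^*,\dots,E_k^*\}$ as the maximization of a monotone submodular \emph{savings} function, following the template of Cohen-Addad et al.~\cite{cohen2019tight}. I would restrict attention to the single enumeration iteration in which the guessed leaders and radii coincide with those induced by $F^*$: for each $i\in[k]$ the leader is the client $c_i^*\in C'$ closest to $f_i^*$ and the radius is $\lambda_i^*=d(f_i^*,c_i^*)$. Since Algorithm~\ref{algo:divkmed} enumerates all ordered $k$-multisets of leaders and (discretized) radii, such an iteration is examined. Restricting the ground set to $\bigcup_{i}\Pi_i^*$, where $\Pi_i^*=\{f\in E_i^*: d(f,c_i^*)\le\lambda_i^*\}$, the matroid constraint becomes ``pick exactly one facility from each $\Pi_i^*$''; this guarantees $\hat S$ satisfies the \pattern, and $F^*$ is feasible because $f_i^*\in\Pi_i^*$.

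Next I would introduce the per-client baseline $B_c=\min_{i\in[k]}\big(d(c,c_i^*)+\lambda_i^*\big)$ and the savings function
\[
g(S)=\sum_{c\in C'} w_c\,\Big(B_c-\min\big(B_c,\,d(c,S)\big)\Big).
\]
Writing $g_c(S)=\max_{f\in S}\big(B_c-d(c,f)\big)^{+}$ exhibits each $g_c$, and hence $g$, as monotone submodular (a clipped maximum of modular functions). The crucial observation is that for \emph{any} feasible $S$ the clip is vacuous: $S$ contains a facility from the group $i_0$ attaining $B_c$, and every $f\in\Pi_{i_0}^*$ obeys $d(c,f)\le d(c,c_{i_0}^*)+\lambda_{i_0}^*=B_c$, so $d(c,S)\le B_c$ and $g(S)=B-\cost(S)$ with $B=\sum_{c}w_cB_c$. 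Thus maximizing $g$ over the matroid is exactly minimizing $\cost$ over feasible solutions, and $g(F^*)=B-\cost(F^*)$.

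I would then bound the baseline. For a client $c$ served by $f_{j}^*$ in $F^*$, the triangle inequality gives $B_c\le d(c,c_{j}^*)+\lambda_{j}^*\le d(c,f_{j}^*)+2\lambda_{j}^*$, and since $c_{j}^*$ is the closest client to $f_{j}^*$ we have $\lambda_{j}^*=d(f_{j}^*,c_{j}^*)\le d(f_{j}^*,c)$; hence $B_c\le 3\,d(c,f_{j}^*)$ and $B\le 3\,\cost(F^*)$ (exactly the estimate of Lemma~\ref{lemma:threeapx}). Running the $(1-\tfrac{1}{e})$-approximation for monotone submodular maximization under a matroid constraint (the subroutine employed in~\cite{cohen2019tight}) yields $\hat S$ with $g(\hat S)\ge(1-\tfrac{1}{e})\max_{S}g(S)\ge(1-\tfrac{1}{e})\,g(F^*)$, so
\[
\cost(\hat S)=B-g(\hat S)\le \tfrac{B}{e}+\big(1-\tfrac{1}{e}\big)\cost(F^*)\le \big(1+\tfrac{2}{e}\big)\cost(F^*).
\]
For \divkmeans I would repeat the argument with squared distances, taking $B_c=\min_i(d(c,c_i^*)+\lambda_i^*)^2$; the same chain gives $B_c\le 9\,d(c,f_j^*)^2$, and the identical $(1-\tfrac{1}{e})$ step yields $\cost(\hat S)\le \tfrac{B}{e}+(1-\tfrac{1}{e})\cost(F^*)\le(1+\tfrac{8}{e})\cost(F^*)$.

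Finally, the additive $\epsilon$ arises from the two approximations already present: the radii are known only up to the discretization factor, so one uses $\lambda_i$ with $\lambda_i^*\le\lambda_i\le(1+\eta)\lambda_i^*$ (inflating $B_c$ by $(1+\eta)$), and the optimization runs on the coreset $C'$, contributing a $(1\pm\nu)$ distortion via Theorem~\ref{theorem:coreset}. Choosing $\eta,\nu$ as suitable functions of the input parameter $\epsilon'$ absorbs both into $\epsilon$. I expect the main obstacle to be the submodular modeling step itself — verifying that the clipped savings function is genuinely monotone submodular and that no feasible solution (in particular $F^*$) ever triggers the clip — since everything downstream is a clean consequence of the $(1-\tfrac{1}{e})$ matroid bound applied to a baseline that Lemma~\ref{lemma:threeapx} certifies is within a factor $3$ (resp.\ $9$) of optimal.
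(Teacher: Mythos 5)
Your proposal is correct and follows essentially the same route as the paper: restrict to the iteration with the correct leaders and radii, reduce to maximizing a monotone submodular savings function over the partition matroid, apply the $(1-1/e)$ guarantee, and bound the baseline by $3$ (resp.\ $9$) times $\cost(F^*)$ via the triangle inequality, absorbing coreset and discretization error into $\epsilon$. The only (cosmetic) difference is that you define the baseline $B_c$ and a clipped savings function directly, whereas the paper realizes the same quantity through fictitious facilities $F'_i$ placed at distance $2\lambda_i^*$ from $\Pi_i$ and sets $\impr(S)=\cost(C',F')-\cost(C',F'\cup S)$.
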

\begin{proof}
Consider the iteration of Algorithm~\ref{algo:kmedpm} where the chosen clients
and radii are optimal, that is, $\lambda_i^*=d(c_i^*,f_i^*)$ and this distance
is minimal over all clients served by $f_i^*$ in the optimal solution.
Assuming the input described in the statement of the lemma, it is clear that in
this iteration we have $f_i^* \in \Pi_i$ (see Algorithm~\ref{algo:kmedpm},
line~\ref{algo:kmedpm:pi}).
Furthermore, given the partition-matroid constraint imposed on it, the proposed
submodular optimization scheme is guaranteed to pick exactly one facility from
each of $\Pi_i$, for all $i$.

On the other hand, known results for submodular optimization show that this
problem can be efficiently approximated to a factor within $\left(1-1/e\right)$
of the optimum~\cite{calinescu2011maximizing}. It is not difficult
to see this
translates into a $(1+\frac{2}{e}+\epsilon)$-approximation ($1+\frac{8}{e}+\epsilon$ resp.) of the optimal choice
of facilities, one from each of $\Pi_i$~\cite{cohen2019tight}. For complete calculations, please see Section \ref{app:fptapx}.
\end{proof}

\mpara{Running Time: }
First we bound the running time of Algorithm~\ref{algo:kmedpm}. Note that, the
runtime of  Algorithm~\ref{algo:kmedpm} is dominated by the two \textit{for}
loops (Line~2 and 3), since remaining steps, including finding approximate solution to the
submodular function \impr, runs in time $\poly(|U|)$. The \textit{for} loop of
clients (Line~2) takes time $ \bigO((k\nu^{-2} \log |U|)^k)$. Similarly, the \textit{for}
loop of discretized distances (Line~3) takes time $\bigO(([\Delta]_\eta)^k) =
\bigO(\log_{1+\eta}^k |U|)$, since $\Delta = \poly(|U|)$. Hence, setting $\eta = \Theta(\epsilon)$,
the overall
running time of Algorithm~\ref{algo:kmedpm} is bounded by\footnote{We use the
fact that, if $k \le \frac{\log |U|}{\log\log |U|}$, then $(k \log^2 |U|)^k =
\bigO(k^k \poly(n))$, otherwise if $k \ge \frac{\log |U|}{\log \log |U|}$, then
$(k \log^2 n)^k = \bigO(k^k (k \log k)^{2k})$.}
\[
\bigO\left(\left(\frac{k \log^{2} |U|}{\epsilon^2 \cdot \log(1+\epsilon)}\right)^k  \poly(|U|)\right) = \bigO\left(\left( \frac{ k^3 \log^2 k}{\epsilon^2
	\log(1+\epsilon)}\right)^k \poly(|U|) \right) 
\]
Since Algorithm~\ref{algo:divkmed} invokes Algorithm~\ref{algo:kmedpm} $\bigO (2^{tk})$ times, its running time is bounded by
$
\bigO\left(\left( \frac{2^t k^3 \log^2 k}{\epsilon^2
	\log(1+\epsilon)}\right)^k \poly(|U|) \right).
$ 



\begin{algorithm}
\caption{\sc Div-$k$-Med$(I=\divkins,\epsilon)$}
\footnotesize
\label{algo:divkmed}
\KwIn{$I$, an instance of the \divkmedian problem\\
\Indp \Indp ~$\epsilon$, a real number}

\KwOut{$T^*$, subset of facilities}

\ForEach{$\vec{\gamma} \in \{0,1\}^t$} {
    $E({\vec{\gamma}}) \gets \{f \in {F} : \vec{\gamma} = \vec{\chi}_{f}\}$
}
$\mathcal{E} \gets \{E({\vec{\gamma}}): \vec{\gamma} \in \{0,1\}^t \}$

$C' \leftarrow \textsc{coreset}((U,d),\pazocal{F},C,k, \nu \leftarrow \epsilon/16)$\\

$T^* \leftarrow \emptyset$\\
\ForEach{multiset $\{E(\vec{\gamma}_{1}),\cdots,E(\vec{\gamma}_{k})\} \subseteq \mathcal{E}$ of size $k$} {
  \If{$\sum_{i \in [k]}\vec{\gamma}_{i} \geq \vec{r}$, element-wise} {
    Duplicate facilities to make subsets in $\{E(\vec{\gamma}_{1}), \dots,
E(\vec{\gamma}_{k})\}$ disjoint\\
    $T \gets
\textsc{$k$-Median-PM}((U,d),\{E(\vec{\gamma}_{1}),\cdots,E(\vec{\gamma}_{k})\},C',\epsilon/4)$\\
    \If{$\textsf{cost}(C',T) < \textsf{cost}(C',T^*)$}{
      $T^* \gets T$\\
    }
  }
}
\Return{$T^*$}
\end{algorithm}

\section{Bicriteria approximation and heuristics}
\label{sec:bicri}
In this section, we describe a bicriteria
approximation algorithm that relies on a simple
observation: we can solve feasibility and clustering
independently, and merge the resulting solutions. 

First, we use a polynomial-time approximation algorithm $\mathcal{A}$ for \kmedian
ignoring the constraints in $\reqvec$ to obtain a solution. If the obtained
solution does not satisfy all the requirements in $\reqvec$. Then, we use a
feasibility algorithm $\mathcal{B}$, to obtain a feasible
{\pattern} of at most size $k$. Picking one facility in each $E_i$
of {\pattern} will satisfy our requirements in $\reqvec$.
Finally, we return the union of the two solutions, which have at most $2k$ facilities that
satisfy the lower-bound constraints and achieve the quality guarantee
of $\mathcal{A}$ (w.r.t. the optimal solution of size $k$).
We can employ, for instance, the local-search heuristic of Arya et
al., which yields a $(2k,3+\epsilon)$-approximation~\cite{arya2004local}, 
or the result of Byrka et al.~\cite{byrka2014improved} to achieve a
$(2k, 2.675)$-approximation. Recall from Proposition~\ref{proposition:introduction:3} 
that, even if we relax
the number of facilities to any function $f(k)$, it is unlikely to
approximate the \divkmedian problem in polynomial time. 
Our bicriteria approximation shows that this is not the case in \fpt time.

Leveraging the fact that $\mathcal{B}$ only needs to find one feasible
{\pattern}, instead of using the time-consuming Lemma~\ref{lemma:feasiblecp}, we
propose the following, relatively efficient strategy to obtain a feasibile
solution leading to exponential speedup.

\subsection{Dynamic programming approach (\DP)}
\begin{theorem}\label{thm:dpfeasibility}
There exists a deterministic algorithm with time $\bigO(kt2^t(r+1)^t \poly(|U|))$ that
can decide and find a feasible solution for \divkmedian. On the other hand,
assuming \seth, for every $\epsilon > 0$ there exists no algorithm that
decides the feasibility of \divkmedian in time 
$\bigO((r+1 - \epsilon)^t \poly(|U|))$ for every $r \ge 1$. 
Here $r = \max_{i \in [t]} r[i]$.
\end{theorem}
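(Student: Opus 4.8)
I would establish the algorithmic and conditional parts separately. \textbf{Upper bound.} The plan is to solve the underlying requirement-satisfaction problem \reqsat{} by dynamic programming over capped coverage vectors. First I would partition $F$ by characteristic vectors into the at most $2^t$ classes $\patternset{\vec{\gamma}}$ forming $\charpart$, which takes $\bigO(t|F|)$ time. Since within a selection only the multiplicity of each class matters, I track a capped coverage vector $\vec{a}\in\prod_{i\in[t]}\{0,\dots,\req{i}\}$, a state space of size at most $(r+1)^t$. Fixing an order on the classes, let $D_j[\vec{a}]$ be the minimum number of facilities taken from the first $j$ classes that reaches coverage $\vec{a}$ (capping each coordinate at $\req{i}$, and $\infty$ if unreachable). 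I process class $j$ with characteristic vector $\vec{\gamma}_j$ by trying every count $m\in\{0,\dots,\min(|\patternset{\vec{\gamma}_j}|,r)\}$ and relaxing $D_j[\mathrm{cap}(\vec{a}+m\vec{\gamma}_j)]\le D_{j-1}[\vec{a}]+m$; taking $m>r$ is never helpful since $r$ copies already saturate every coordinate in the support of $\vec{\gamma}_j$. The instance is feasible exactly when the minimum count reaching the full target $\reqvec$ is at most $k$; as $k\le|F|$ by definition, an optimal partial selection can be padded with arbitrary facilities up to size exactly $k$.

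\textbf{Running time and recovery.} Summing over at most $2^t$ classes, $(r+1)^t$ states, $r+1$ values of $m$, and $\bigO(t)$ per capped update gives $\bigO(2^t(r+1)^{t+1}t)$; since any feasible $S$ has $\req{i}\le|S|=k$, we have $r+1\le k+1$, so this is $\bigO(kt2^t(r+1)^t)$, and $\poly(|U|)$ absorbs the input handling and the final extraction, matching the claimed bound. To \emph{find} a solution rather than only decide, I would store at each cell the count $m$ chosen, retrace the back-pointers to learn how many facilities to draw from each class, select arbitrary facilities accordingly, and pad to $k$.

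\textbf{Lower bound.} Here I would reduce from CNF-SAT and invoke \seth, fixing $r\ge1$. The point is to inflate the base of the exponent to $r+1$ by variable packing: group the $n$ variables into $t=\lceil n/\lfloor\log_2(r+1)\rfloor\rceil$ blocks so that $(r+1)^t\ge 2^n$, and let the number of facilities selected inside the $i$-th group --- a value in $\{0,\dots,r\}$ --- encode the partial assignment of the $i$-th block. By letting a facility belong to several groups through its characteristic vector, I would also place facilities in ``clause groups'' recording which clauses their literals satisfy; with clause-group demands set to $1$ and the budget $k$ chosen to account for exactly one assignment per block, any feasible selection is forced to spell out a single consistent truth assignment that satisfies every clause, and conversely. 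An $\bigO((r+1-\epsilon)^t\poly(|U|))$ feasibility algorithm would then decide satisfiability in $\bigO((r+1-\epsilon)^{n/\log_2(r+1)}\poly)=\bigO(2^{(1-\delta)n}\poly)$ time for some $\delta=\delta(r,\epsilon)>0$, contradicting \seth. I expect the gadget design to be the crux: encoding the disjunctive, non-monotone clause condition and pinning each block to an \emph{exact} assignment using only the monotone ``at least $\req{i}$'' constraints together with a tight budget, all without spending extra groups that would erode the $n/\log_2(r+1)$ packing and hence the tightness of the base.
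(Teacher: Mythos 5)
Your upper bound is correct and is essentially the paper's argument in a lightly reorganized form: the paper also partitions $F$ into the classes $\charpart$, runs a DP over the capped state space $\{0,\dots,k\}^t$ (effectively $(r+1)^t$ states), and recovers a witness by storing back-pointers. The only cosmetic difference is that the paper expands each class into $\min\{f(E_i),k\}$ copies and makes a binary take/skip choice per copy, whereas you keep one DP stage per class and branch on the multiplicity $m\le r$; both give $\bigO(kt2^t(r+1)^t\poly(|U|))$, and your observations that $m>r$ is never useful and that a sub-$k$ witness can be padded to size exactly $k$ are sound.

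The lower bound is where your proposal genuinely diverges from the paper, and it is also where it has a real gap. The paper's proof is a one-liner: for $r=1$ the feasibility problem (\reqsat) \emph{is} \setcover with universe size $t$, so an $\bigO((r+1-\epsilon)^t\poly(|U|))=\bigO((2-\epsilon)^{t}\poly(|U|))$ feasibility algorithm would solve \setcover in time $\bigO((2-\epsilon)^{|\mathcal{U}|}\poly)$, contradicting the cited conjecture of Cygan et al. Under that reading, exhibiting hardness at $r=1$ already rules out the stated running time. You instead attempt the stronger statement (hardness for each fixed $r$) via a direct CNF-SAT reduction with variable packing into blocks of $\lfloor\log_2(r+1)\rfloor$ bits. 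The parameter accounting $(r+1-\epsilon)^{n/\log_2(r+1)}=2^{(1-\delta)n}$ is fine, but the reduction itself is not carried out, and you yourself flag the gadget as ``the crux.'' Two concrete obstacles stand in the way as sketched: (i) the problem only offers monotone constraints $|S\cap G_i|\ge\req{i}$ together with a global budget $k$, so forcing the count inside a block group to equal a \emph{specific} value in $\{0,\dots,r\}$ (rather than merely exceed a threshold) requires additional complementary groups, and it is then unclear how a facility's membership in a clause group can depend on a \emph{count} rather than on the facility's own identity --- a facility either is or is not in $G_j$, so the clause-satisfaction information must be carried by which facilities are picked, not how many; (ii) the clause groups themselves contribute to $t$, and with $\Theta(n)$ clauses this inflates $t$ to $\Theta(n)$ and destroys the $n/\log_2(r+1)$ packing that the whole argument rests on. Until these are resolved, the lower-bound half of your proof is a plan rather than a proof; if you only need the theorem as the paper states and proves it, the \setcover specialization at $r=1$ is the short route.
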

\begin{proof}
First we given an algorithm for feasibility.
An instance of feasibility problem is 
$I= (\mathcal{E}\subseteq \{0,1\}^t, \vec{f},\vec{r})$, where 
$f: \mathcal{E} \rightarrow [n]$ is the frequency vector, and 
$\vec{r} \in \{0,\cdots,k\}^t$ is the lower bound vector. 
We say a $k$-multiset $E$ of
$\mathcal{E}$ respects $f$, if for every $E_i \in E$, $E$ contains $E_i$ at
most $f(E_i)$ times. The goal is to find a $k$-multiset $E^*$ of $\mathcal{E}$
respecting $f$ such that $\sum_{i \in [k]} E^*_i \ge \vec{r}$.

The approach is similar to the dynamic program technique employed for {\sc
SetCover}. First, we obtain $\mathcal{E}'$ from $\mathcal{E}$ as
follows. For every element $E_i \in \mathcal{E}$, create $\min\{f(E_i),k\}$
copies of $E_i$ in $\mathcal{E}'$. Thus, $|\mathcal{E}'| \le k |\mathcal{E}|$.
Now let us arbitrarily order the elements in $\mathcal{E}'$ as 
$\mathcal{E}' = (E_1,E_2,\cdots)$. 
For every $i\in [|\mathcal{E}'|]$ and $\vec{\eta} \in \{0,\cdots,k\}^t$, we have an entry
$A[i,\eta]$ which is assigned the minimum number of elements in
$E_1,\cdots,E_i$ summing to at least $\vec{\eta}$. The dynamic program
recursion works as follows, as base case $A[0,\vec{0}] =0$. For each $ \vec{\eta} \ne \vec{0}$,
$A[0,\vec{\eta}] = \infty $
\begin{align}
  \label{eq:dp}
	A[i,\vec{\eta}] = & \min\{1+A[i-1,\vec{\eta} - E_i], A[i-1,\vec{\eta}]\}\\ 
	& \textit{ round negative entries in $\vec{\eta} - E_i$ to zero.} \nonumber
\end{align}

Finally, we check if $A[|\mathcal{E}'|,\vec{r}] \le k$. Note that any solution
on $\mathcal{E}'$ respects $f$ due to construction. Finally, the running time of
the above algorithm is $|\mathcal{E}'|\cdot (r+1)^t \cdot t =
\bigO(kt2^t(r+1)^t)$. 

To find a feasible solution, we update our dynamic program table as follows:	
For $A[0,\vec{0}] =''$. For each $ \vec{\eta} \ne \vec{0}$,
\begin{align*}
	A[0,\vec{\eta}] &= '00\ldots0' \quad \textit{string of $k+1$ zeros}  \\
	A[i,\vec{\eta}] &= 
	\begin{cases}
		A[i-1,\vec{\eta}] \quad \text{if } |A[i-1,\vec{\eta}]| < |\{E_i\} \cup A[i-1,\vec{\eta} - E_i]| \\
	\{E_i\} \cup  A[i-1,\vec{\eta} - E_i] \ \text{otherwise}
	\end{cases}
\end{align*}
$\textit{ round negative entries in $\vec{\eta} - E_i$ to zero, and union is for multiset.}$

Finally, we check whether or not $|A[|\mathcal{E}'|,\vec{r}]| \le k$.
%
To show the lower bound on runtime, note that if there exists an algorithm running
in time $\bigO((r+1 - \epsilon)^t \poly(|U|))$, for some $\epsilon >0$, then we
can solve {\sc SetCover} in time $\bigO((2-\epsilon)^{|\mathcal{U}|}
\poly(|\mathcal{U}|))$, where $\mathcal{U}$ is the universe of the
\textsc{SetCover} instance. This is because $r=1$ for \textsc{SetCover}, which
contradicts \seth~\cite[Conjecture~14.36]{cygan2015parameterized,cygan2016onproblems}.
\end{proof}

\subsection{Linear programming approach (\LP)}
In this subsection, we propose a heuristic for finding a feasible solution based
on randomized rounding of the fractional solution of a linear program. The linear
program formulation is as follows:
\begin{align*}
    \text{Minimize} \quad &  0 \cdot x~\text{such that}~\mathcal{E} \cdot x \geq \reqvec,
    \\  &  \sum_i x_i \leq k, 0 \leq x_i \leq f(E_i). 
\end{align*}
We solve the \LP to obtain a fractional solution and round $x_i$ to an
integer value using randomized rounding strategy inspired by~\cite{Raghavan1987rounding}.
$$
x'_i =
\begin{cases}
\lfloor{x_i}\rfloor & \text{with probability } 1 - x_i + \lfloor{x_i}\rfloor\\
\lceil{x_i}\rceil & \text{otherwise} \\
\end{cases}
$$
Therefore, it holds that $\mathbf{E}(\sum_i x'_i) \leq k$.
However, the lower requirements might be unfilled,
so the algorithm needs to verify the correctness and repeat the procedure a many times
and produce another solution (E.g., by randomizing objective function).

Even though the randomized rounding approach does not guarantee finding an
existing solution, the algorithm is very effective in real-world dataset, as
demonstrated in Section~\ref{sec:experiments}.

\subsection{Local search heuristic (\lsone)}
First, we present a local-search algorithm for \kmedianpm problem and discuss how
to apply this approach to solve \divkmedian problem. Given an
instance \kmedkpmins, we pick one facility from each $E_i$ at random as an
initial assignment, and continue to swap with facilities from the same group until
the solution is con verged i.e a facility $f \in E_i$ is only allowed to swap with
facility $f' \in E_i$ for all $i \in [k]$.

Recall that each feasible constraint pattern is an instance of \kmedianpm,
likewise, we employ the heuristic discussed above for each instance to obtain a
solution with minimum cost. The runtime of the algorithm is $\bigO(2^{tk}
\poly(|U|)$, since we have at most $\bigO(2^{tk})$ feasible constraint patterns
and each iteration of the local search can be executed in polynomial time. In our
experiments, we refer to this algorithm as \lsone. Bounding the approximation ratio
of \lsone is left as an open problem.

\section{Experiments}
\label{sec:experiments}
This section discusses our experimental setup and results. Our objective is
mainly to evaluate the scalability of the proposed methods.

\subsection{Experimental setup}

\mpara{Hardware.}
Our experiments make use of two hardware configurations: ($i$)
a {\em desktop} with a $4$-core {\em Haswell} CPU with $16$~GB of main memory,
Ubuntu 21.10; ($ii$) a {\em compute node} with a $20$-core {\em Cascade lake}
CPU with $64$~GB of main memory, Ubuntu 20.04.

\mpara{Datasets.}
We use datasets from UCI machine learning repository~\cite{dua2019uci} (for details check
the corresponding webpage of each dataset). Data are processed by assigning
integer values to categorical data and normalize each column to unit norm. We
assume the set of clients and facilities to be the same i.e,, $U = C = F$.

\mpara{Data generator.} 
For scalability, we generate synthetic data using {\tt make\_blob}
from {\tt scikit-learn}. The groups are generated by sampling data points
uniformly at random and restricting the maximum groups a data point can belong
to, i.e. each data point belongs to at least one and at most $t/2$ groups. 
We ensure that the same instance is generated for each
configuration by using an initialization seed value.

\mpara{Baseline.} For \divkmedian, we use the local-search algorithm with no
requirement constraints as a baseline, denoted as \lszero, which is a
$5$-approximation for \kmedian. Additionally, we implemented a
$p$-swap local-search algorithm, denoted as \lszeropswap, which is a
$(3+\frac{2}{p})$-approximation~\cite{arya2001local}. 
We observed no significant improvement in the cost of solution of
\lszeropswap compared to \lszero with $p=2$.
We also experimented with trivial algorithms based
on {\em brute force} and {\em linear program} solvers, which fail to scale
for even modest size instance with $|U|=100, k= 6$.
For \divkmeans, we use {\tt $k$-means++} with no requirement constraints as
baseline, denoted as \KM, which is a $\log(k)$-approximation
for \kmeans~\cite{vassilvitskii2006kmeans}. 
For each dataset we
perform $5$ executions of \lszero (or \KM) using random initial assignments to obtain
a solution with minimum cost. 

\mpara{Implementation.} 
Our implementation is written in {\tt Python}
programming language. We make use of {\tt numpy} and {\tt scipy} python packages for
matrix computations. We use {\tt $k$-means++} implementation from 
{\tt scikit-learn}. For coresets we use importance sampling, which results in
coresets of size $\bigO(kD\epsilon^{-2})$,
$\bigO(kD^3\epsilon^{-2})$ for \kmedian and \kmeans, respectively, where $D$ is
the dimension of data~\cite{feldman2011unified, bachem2017practical}.
For discretizing distances we use the existing implementation of binning from 
{\tt scikit-gstat} python package.

Our exhaustive search algorithm is implemented as matrix multiplication
operation, thereby we use optimized implementation of {\tt numpy} to
enumerate feasible constrained patterns. To achieve this, we generate two
matrices, a matrix $A_{|\charpart| \times t}$
encoding bit vectors $[0,1]^t$ corresponding to subset lattice of
$\groups=\{G_1,\dots,G_t\}$, and,
matrix $B_{{{|\charpart|}\choose{k}} \times |\charpart|}$ enumerating all combinations (with
repetitions) of choosing $k$ facilities from non-intersecting groups in
$\charpart$ and multiply $B \times A = R_{{{|\charpart|}\choose{k}} \times t}$.
Finally, for each row of $R$
we verify if the requirements in $\reqvec$ are satisfied elementwise to obtain
of all feasible constraint patterns. More precisely, if $i$-th row of $R$
satisfy requirements in $\reqvec$ elementwise, it implies that $i$-th row of $B$ is a feasible
constraint pattern. For finding one feasible
constrained pattern, we enumerate rows of the matrix $B$ in batches, $p$ rows
at a time and multiply with matrix $A$ until we obtain one solution satisfying
$\reqvec$ elementwise. This is
essential for scaling of bicriteria algorithms, where it is sufficient to find one feasible
constraint pattern.
Our dynamic program is implemented using {\tt numpy} arrays. Observe
carefully in Equation~(\ref{eq:dp}) that computing $A[i, \vec{\eta}]$ relies
only on the values of $A[i-1,\vec{\eta}]$, so we only use
$2\cdot(r+1)^t k$ memory.
To reduce memory footprint and improve scalability, we avoid
precomputing of distances between datapoints, which requires $\bigO(n^2)$ memory.
Instead, we compute distances on-the-fly. Nevertheless, this has an additional overhead
of $\bigO(D)$ in time.

Our implementation is available anonymously as open source~\cite{sourcecode}.

\subsection{Experimental results}
This subsection discusses our experimental results.

\begin{table}
\caption{\label{table:experiments:setup}Experimental setup. Data dimension $D=5$.}
\footnotesize
\begin{tabular}{l r r r}
\toprule
Experiment & $|U|$ & $t$ & $k$ \\
\midrule
\multicolumn{3}{l}{\underline{Figure~\ref{fig:scaling:feasibility}~(Feasibility)}}\\
  left & $10^3, \dots, 10^8$ &          $7$ &          $5$ \\
center &              $10^4$ &          $6$ & $4,\dots,20$ \\
 right &              $10^4$ & $4,\dots,14$ &          $5$ \\
\midrule
\multicolumn{3}{l}{\underline{Figure~\ref{fig:scaling:bicriteria:1},\ref{fig:scaling:bicriteria:2}~(Bicriteria)}}\\
  left & $10^2,\dots,10^5$ & $7$ & $5$ \\
center & $10^4$ & $5$ & $4,\dots,10$ \\
 right & $10^4$ & $4,\dots,8$ & $4,\dots,10$ \\
\midrule
\multicolumn{3}{l}{\underline{Figure~\ref{fig:scaling:es-ls}~($\ES + \lsone$)}}\\
  left & $10^2,4\cdot10^2,\dots,10^4,4\cdot10^4$ & $5$ & $4$\\
center & $10^3$ & $5$ & $2,\dots,7$ \\
 right & $10^3$ & $4,\dots,7$ & $5$ \\
\bottomrule
\end{tabular}
\vspace{-0.5cm}
\end{table}

\mpara{Scalability.}
The experimental setup of our scalability experiments is available in
Table~\ref{table:experiments:setup}. Our feasibility experiments execute on
the {\em desktop} configuration. Bicriteria experiments are executed on the {\em compute}
node. All reported runtimes are in seconds, and we terminated experiments that
took more than two hours. 

Our first set of experiments studies the scalability of finding one feasible
constraint pattern. In Figure~\ref{fig:scaling:feasibility}, we report the runtime of exhaustive
enumeration (\ES), dynamic program (\DP) and linear program (\LP) algorithms for
finding a feasible constraint satisfaction pattern as a function of number of
facilities $|U|$ (left), number of cluster centers $k$ (center) and number of
groups $t$ (right). 
For each configuration of $|U|$, $t$ and $k$ in
Table~\ref{table:experiments:setup}, we report runtimes of 10 independent input
instances. We observed little variance in runtime among the independent inputs
for \DP and no significant variance in runtime for \LP and \ES.
Recall that finding a feasible constraint pattern is \np-hard and \wtwo-hard
(See Section~\ref{sec:hardness}). Despite that, our algorithms solve instances with up to 
$100$ million facilities
in less than one hour on a desktop computer, provided that number of
cluster centers and groups are small i.e, $k=5, t=7$.

Surprisingly, \LP performs better with respect to runtime. However,
randomized rounding fails find a feasible solution for large
values of $t > 8$, since the fractional solution obtained is sparse when
$|\charpart| \approx 2^t > 2^8$. Additionally, when
the number of facilities is large i.e, $|U|=10^8$, 
the runtime of \LP, \DP and \ES converge.

\begin{figure*}
\arraycolsep=0.0pt\def\arraystretch{0.0}
\captionlistentry{}
\label{fig:scaling:feasibility}
\captionlistentry{}
\label{fig:scaling:bicriteria:1}
\captionlistentry{}
\label{fig:scaling:es-ls}

\begin{tabular}{c}
\includegraphics[width=0.8\linewidth]{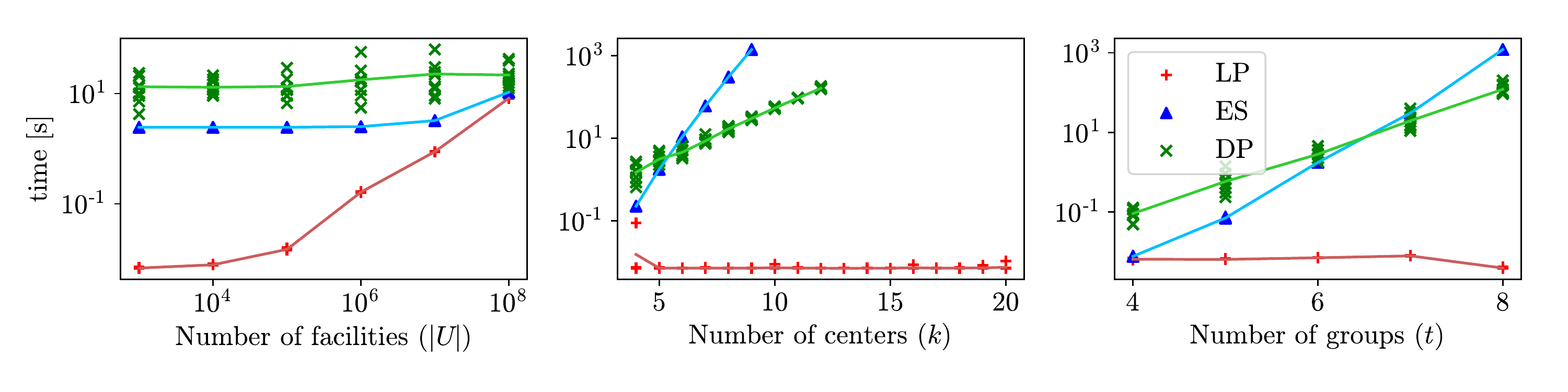}\\
{\bf Figure~2: Scalability of algorithms for finding feasible constraint pattern.}\\
\includegraphics[width=0.8\linewidth]{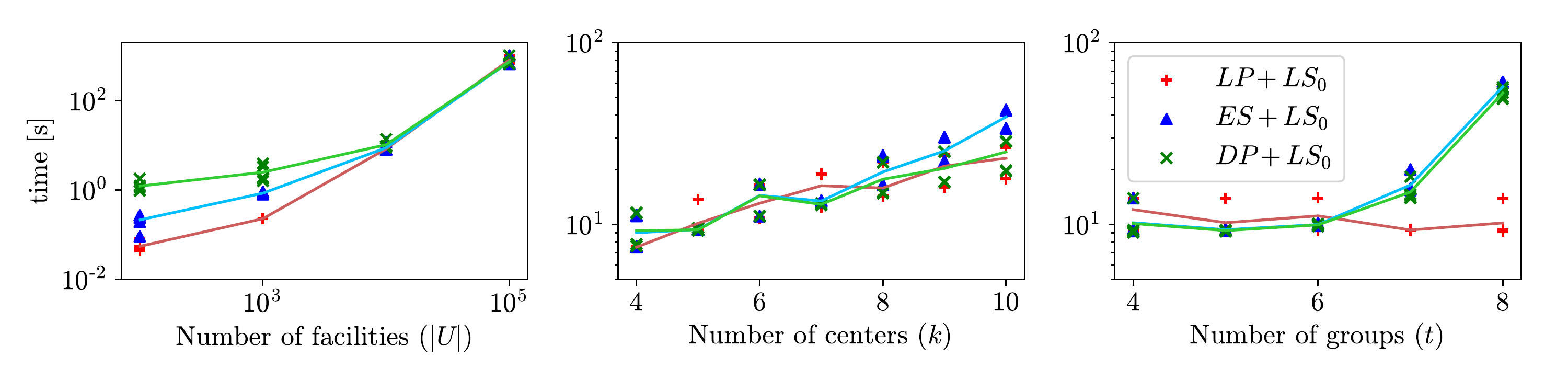}\\
{\bf Figure~3: Scalability of bicriteria algorithms for \divkmedian.}\\
\includegraphics[width=0.78\linewidth]{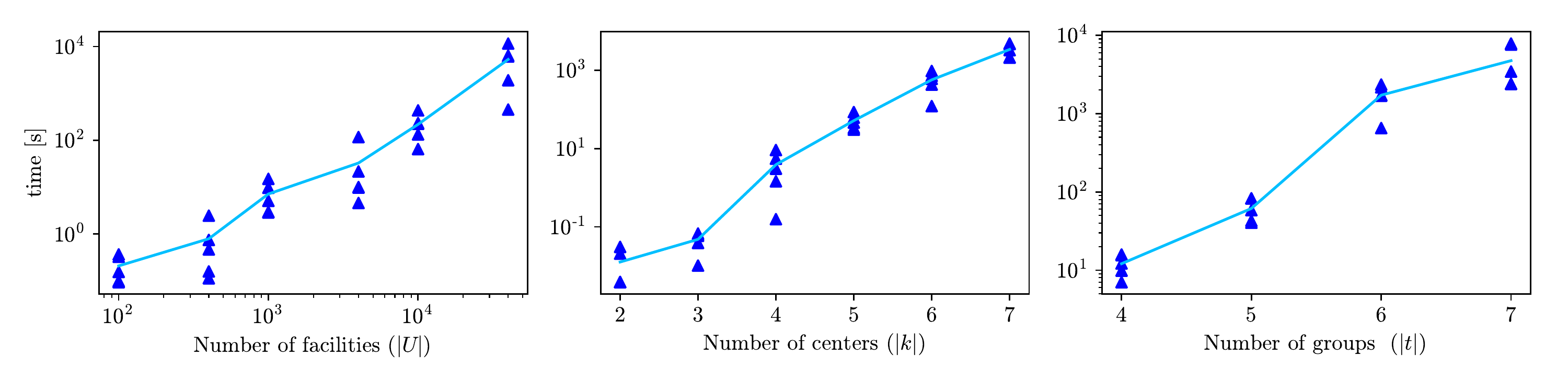}\\
{\bf Figure~4: Scalability of $\ES+\lsone$ algorithm for \divkmedian.}
\end{tabular}
\end{figure*}


Our second set of experiments studies the scalability of bicriteria algorithms.
In Figure~\ref{fig:scaling:bicriteria:1}, we report the runtime of local search
(\lszero) combined with \ES, \DP and \LP algorithms for solving \divkmedian
problem, as a function of number of facilities $|U|$ (left), number of cluster
centers $k$ (center) and number of groups $t$ (right). 
For each configuration of $|U|$, $t$ and $k$ in
Table~\ref{table:experiments:setup}, we report runtimes of $5$ independent input
instances and observed little variance in runtime.
We observed similar scalability for \divkmeans, for which we
make use of {\tt kmeans++} (\KM) implementation from {\tt scikit-learn}, along with
\ES, \LP and \DP (See Supplmentary~\ref{fig:scaling:bicriteria:2}).

Our third set of experiments studies the scalability of $\ES + \lsone$
In Figure~\ref{fig:scaling:es-ls}, we report runtime as a function of number of
facilities $|U|$ (left), number of centers $k$ (center) and number of groups $t$
(right). 
For each configuration of $|U|$, $t$ and $k$ in
Table~\ref{table:experiments:setup}, we report runtimes of $5$ independent input
instances and observed little variance in runtime.
We observed high variance in runtime, as a result of variance in the number of
feasible constraint patterns among independent inputs. The algorithm manages to
solve two instances with up to 40 thousand facilities in approximately $2.5$ hours
on a desktop computer.



\mpara{Experiments on real datasets.}
For each dataset, we generate two disjoint groups $G_1, G_2$. For this we choose
gender, except for {\tt house-votes}, where we choose party affiliation. We use the
protected attributes race or age group to generate groups $G_3$ and $G_4$,
respectively, so groups $G_3, G_4$ intersect with either or both $G_1, G_2$.
The experiments are executed on {\em desktop} with number of centers $k=6$, number of groups
$t=4$ and requirement vector $\reqvec=\{3,3,2,1\}$. That is, we have a
requirement that the chosen cluster centers must be an equal number of men and
women, with additional requirements to pick at least
two cluster centers that belong to a group representing race and one
center that belongs to a group representing a certain age group. 
For each dataset, we execute $5$ iterations of each algorithm with different
initial assignment to report a solution with minimum cost and corresponding runtime.

In Table~\ref{table:experiments:dataset}, we report dataset name, size $|U|$, 
dimension $D$ in Column~1-3, respectively. Column~4 is the
runtime of \lszero (baseline). We report results of bicriteria
approximation algorithms $\lszero+\LP$ in Column~5-7, $\lszero+\ES$ in Column~8-10 and
$\lszero+\DP$ in Column~11-13. For each bicriteria algorithm we report runtime,
$\zeta^*=\frac{\cost(ALG)}{\cost(\lszero)}$ which is the ratio of the cost bicriteria
algorithm to the cost of \lszero and the size of reported solution $k^*$. Finally, in Column~14-15,
Column~16-17 and Column~18-19, we report results of $\LP+\lsone$, $\ES+\lsone$ and
$\fpt(3 + \epsilon)$ approximation algorithm, respectively. For each of these algorithms we
report runtime and $\zeta^*=\frac{\cost(ALG)}{\cost(\lszero)}$.


In bicriteria algorithms, \lszero consumes the majority ($> 90\%$) of the runtime, and a minority ($<10\%$) of the runtime is spent on finding a feasible
constraint pattern. This observation is trivial by comparing runtime of
bicriteria algorithm(s) and \lszero. 
As expected, $\lszero+\DP$ returns solution with minimum size $k^*$ with no
significant change in the cost of solution obtained from \LP+\lszero and \ES+\lszero.
Note that the value of $\zeta^* < 1.0$ since the size of solution
obtained $k^* > k = 6$.




Even though the \fpt approximation algorithms presented in Section~\ref{sec:algorithm} have good
theoretical guarantees, they fail to perform in practice. We believe the reason
is two-fold. First, the size of the coreset obtained using importance
sampling $\left(\bigO(kD^2\epsilon^{-2})\right)$ is relatively large. Second, the
$\epsilon'$ factor used for discretizing distances is also large. In this regard,
there is still room for implementation engineering to make the algorithm
practically viable.

\begin{table*}
\caption{\label{table:experiments:dataset}Experiments on datasets with $k=6$,
$t=4$ and $\reqvec=\{3,3,2,1\}$.}
\footnotesize
\begin{tabular}{l r r r r r r r r r r r r r r r r r r r}
\toprule
& & & & 
\multicolumn{9}{c}{Bicriteria approximation ($2k, \alpha$)} &
\multicolumn{4}{c}{Heuristics ($k$)} &
\multicolumn{2}{c}{\fpt($k,t,\epsilon$)}\\

\cmidrule(r{1em}){5-13}
\cmidrule(r{1em}){14-17}
\cmidrule(r{1em}){18-19}

& & & \lszero &
\multicolumn{3}{c}{\lszero + \LP} & 
\multicolumn{3}{c}{\lszero + \ES} & 
\multicolumn{3}{c}{\lszero + \DP} &
\multicolumn{2}{c}{\LP + \lsone} & 
\multicolumn{2}{c}{\ES + \lsone} & 
\multicolumn{2}{c}{$(3+\epsilon)$-apx} \\

\cmidrule(r{0.75em}){5-7}
\cmidrule(r{0.75em}){8-10}
\cmidrule(r{0.75em}){11-13}
\cmidrule(r{0.75em}){14-15}
\cmidrule(r{0.75em}){16-17}
\cmidrule(r{0.75em}){18-19}

Dataset & $|U|$ & $D$ & 
time & 
time & $\zeta^*$ & $k^*$ &
time & $\zeta^*$ & $k^*$ & 
time & $\zeta^*$ & $k^*$ &
time & $\zeta^*$ & 
time & $\zeta^*$ & 
time & $\zeta^*$ \\
\midrule
				 switzerland &   123 & 14 &   0.05 &   0.14 & 0.92 & 10 &   0.05 & 0.92 & 10 &   0.09 & 0.92 & 10 &   0.35 & 1.08 &     0.16 & 1.08 & 16\,841.32 & 2.82 \\
					 hepatitis &   155 & 20 &   0.07 &   0.07 & 0.94 & 11 &   0.07 & 0.95 & 10 &   0.11 & 0.95 & 10 &   0.39 & 1.07 &     0.27 & 1.07 & 18\,922.51 & 1.81 \\
									va &   200 & 14 &   0.06 &   0.06 & 0.95 & 11 &   0.06 & 0.95 & 11 &   0.10 & 0.98 &  9 &   0.20 & 1.27 &     0.01 & 1.27 & 14\,855.96 & 1.76 \\
					 hungarian &   294 & 14 &   0.14 &   0.14 & 0.95 & 10 &   0.14 & 0.96 &  9 &   0.17 & 0.98 &  8 &   0.74 & 1.02 &     4.00 & 1.01 &          - &  - \\
			 heart-failure &   299 & 13 &   0.18 &   0.19 & 0.93 & 11 &   0.19 & 0.95 &  9 &   0.22 & 0.95 &  9 &   0.71 & 1.05 &     3.72 & 1.05 &          - &  - \\
					 cleveland &   303 & 14 &   0.09 &   0.10 & 0.93 & 10 &   0.10 & 0.99 &  9 &   0.13 & 0.99 &  8 &   0.47 & 1.07 &     1.33 & 1.05 &          - &  - \\
				 student-mat &   395 & 33 &   0.24 &   0.25 & 0.96 & 12 &   0.25 & 0.97 & 12 &   0.28 & 0.99 &  8 &   0.36 & 1.05 &     0.32 & 1.05 &          - &  - \\
			house-votes-84 &   435 & 17 &   0.16 &   0.16 & 0.97 & 10 &   0.16 & 0.98 &  9 &   0.19 & 0.98 &  9 &   0.71 & 1.17 &     3.20 & 1.11 &          - &  - \\
				 student-por &   649 & 33 &   0.50 &   0.51 & 0.98 & 10 &   0.50 & 0.98 & 10 &   0.53 & 0.99 &  9 &   0.49 & 1.02 &     0.52 & 1.02 &          - &  - \\
		drug-consumption &  1884 & 32 &   2.58 &   2.69 & 0.98 & 12 &   2.68 & 0.98 & 12 &   2.72 & 0.99 &  8 &   0.49 & 1.08 &     0.41 & 1.07 &          - &  - \\
								bank &  4521 & 17 &   8.56 &   8.72 & 0.97 & 10 &   8.71 & 0.99 & 10 &   8.76 & 0.98 &  9 &   1.41 & 1.10 &     2.07 & 1.10 &          - &  - \\
						 nursery & 12960 &  9 &  40.21 &  40.48 & 0.99 & 10 &  40.66 & 0.99 & 10 &  40.43 & 0.99 &  9 &  22.38 & 1.14 &    43.20 & 1.14 &          - &  - \\
			vehicle-coupon & 12684 & 26 &  51.87 &  51.34 & 0.98 & 12 &  50.88 & 0.98 & 12 &  50.98 & 0.99 &  8 &   8.59 & 1.12 &    16.43 & 1.12 &          - &  - \\
				 credit-card & 30000 & 25 & 928.77 & 945.56 & 0.99 & 12 & 939.98 & 0.99 & 12 & 941.07 & 1.00 &  8 &   9.18 & 1.18 &    18.89 & 1.18 &          - &  - \\
				dutch-census & 32561 & 15 & 376.73 & 384.15 & 0.97 & 12 & 390.82 & 0.98 & 12 & 385.36 & 0.99 &  8 &  76.34 & 1.40 &   151.18 & 1.32 &          - &  - \\
					 bank-full & 45211 & 17 & 934.14 & 958.79 & 0.97 & 11 & 958.86 & 0.98 & 11 & 948.85 & 0.97 & 10 & 103.57 & 1.10 &   202.73 & 1.10 &          - &  - \\
     diabetes & 101\,766 & 50 & 15\,896.14 &      - &    - &  - &      - &    - &  - &      - &    - &  - & 829.96 & 1.07 &1\,503.05 & 1.01 &          - &  - \\
\bottomrule
\end{tabular}
\vspace{-0.4cm}
\end{table*}

\section{Related work}
\label{sec:related}
\kmedian is a classic problem in computer science.
The first constant-factor approximation for metric \kmedian was presented by
Charikar et al.~\cite{charikar2002constant}, which was improved to
$(3+\epsilon)$ in a now seminal work by Arya et al.~\cite{arya2004local}, using
a local-search heuristic. The best known approximation ratio for metric
instances stands at 2.675, which is due to
Byrka et al.~\cite{byrka2014improved}. Kanungo et al.~\cite{kanungo2004local}
gave a $(9+ \epsilon)$ approximation algorithm for \kmeans, 
which was recently improved to $6.357$ by Ahmadian et al.\cite{ahmadian2019better}.
On the other side of the coin, the \kmedian
problem is known to be \np-hard to approximate to a factor less than
$1+2/e$~\cite{guha1998greedy}. Bridging this gap is a well known open problem.
In the \fpt landscape, finding an optimal
solution for \kmedian/\kmeans are known to be \wtwo-hard with respect to parameter $k$
due to a reduction by Guha and Khuller~\cite{guha1998greedy}. More recently,
Cohen-Addad et al.~\cite{cohen2019tight} presented \fpt approximation algorithms
with respect to parameter $k$, with approximation ratio
$(1+\frac{2}{e}+\epsilon)$ and $(1+\frac{8}{e}+ \epsilon)$ for \kmedian and
\kmeans, respectively. They showed that the ratio is essentially tight assuming \gapeth. Their
result also implies a $(2+\epsilon)$ approximation algorithm for \matmedian in
\fpt time with respect to parameter $k$.

In recent years the attention has turned in part to variants of the problem with
constraints on the solution. One such variant is the red-blue median problem
(\rbmedian), in which the facilities are colored either red or blue, and a
solution may contain only up to a specified number of facilities of each
color~\cite{hajiaghayi2010budgeted}. This formulation was generalized by the
matroid-median problem (\matmedian)~\cite{krishnaswamy2011matroid}, where
solutions must be independent sets of a matroid. 
Constant-factor approximation algorithms were given by 
Hajiaghayi et al.~\cite{hajiaghayi2010budgeted,hajiaghayi2012local} and
Krishnaswamy et al.~\cite{krishnaswamy2011matroid} for \rbmedian and \matmedian
problems, respectively.

\mpara{Algorithmic fairness.} In recent years, the notion of fairness
in algorithm design has gained significant traction. The underlying
premise concerns data sets in which different social groups, such as
ethnicities or people from different socioeconomic backgrounds,  can be identified.
The output of an algorithm, while suitable when measured
by a given objective function, might negatively impact
one of said groups in a disproportionate
manner~\cite{biddle2020predicting,berk2021fairness}.

In order to mitigate this shortcoming, constraints or penalties can be
imposed on the objective to be optimized, so as to promote more
equitable
outcomes~\cite{hardt2016equality,zafar2017fairness,dwork2018decoupled,fu2020fairness}.

In the context of clustering, which is the focus of the present work,
existing proposals have generally dealt with the notion of equal
representation within
clusters~\cite{chierichetti2017fair,rosner2018privacy,schmidt2019fair,huang2019coresets,backurs2019scalable,bercea2019cost}. That
is, the \textit{clients} in each cluster should not be comprised
disproportionately of any particular group, and all groups should
enjoy sufficient representation in all clusters. In contrast, this paper deals with
the problem of representation constraints among the
\textit{facilities}, as formalized in the recent work of Thejaswi et
al~\cite{thejaswi2021diversity}. 
While the
problem admits no polynomial-time approximation algorithms for the
general case, the authors of the original work presented constant-factor
approximation algorithms for special cases~\cite{thejaswi2021diversity}.

\section{Conclusions \& future work}
\label{sec:conclusions}
In this paper we have provided a comprehensive analysis of the \textit{diversity-aware $k$-median}
problem, a recently proposed variant of \kmedian in the context of
algorithmic fairness. We have provided a thorough characterization of
the parameterized complexity of the problem, as well as the first
fixed-parameter tractable constant-factor approximation algorithm. Our
algorithmic and hardness results naturally extend for {diversity-aware
$k$-means} problem.

Despite its theoretical guarantees, said approach is
impractical. Thus, we have proposed a faster dynamic program for
solving the feasibility problem, as well as an efficient, practical
linear-programming approach to serve as a building block for the
design of effective heuristics. In a variety of experiments on
real-world and synthetic data, we have shown that our approaches are
effective in a wide range of practical scenarios, and scale to
reasonably large data sets.

Our results open up several interesting directions for future
work. For instance, it remains unclear whether further speed-ups are
possible in the solution of the feasibility problem. The exponent
of $t\log r$ in our algorithm is close to the known lower bound of
$t$, so it is natural to ask whether this extra factor can be shaved off.

\bibliographystyle{ACM-Reference-Format}
\bibliography{sources-short}


\begin{thebibliography}{36}


\ifx \showCODEN    \undefined \def \showCODEN     #1{\unskip}     \fi
\ifx \showDOI      \undefined \def \showDOI       #1{#1}\fi
\ifx \showISBNx    \undefined \def \showISBNx     #1{\unskip}     \fi
\ifx \showISBNxiii \undefined \def \showISBNxiii  #1{\unskip}     \fi
\ifx \showISSN     \undefined \def \showISSN      #1{\unskip}     \fi
\ifx \showLCCN     \undefined \def \showLCCN      #1{\unskip}     \fi
\ifx \shownote     \undefined \def \shownote      #1{#1}          \fi
\ifx \showarticletitle \undefined \def \showarticletitle #1{#1}   \fi
\ifx \showURL      \undefined \def \showURL       {\relax}        \fi
\providecommand\bibfield[2]{#2}
\providecommand\bibinfo[2]{#2}
\providecommand\natexlab[1]{#1}
\providecommand\showeprint[2][]{arXiv:#2}

\bibitem[\protect\citeauthoryear{Ahmadian, Norouzi-Fard, Svensson, and
  Ward}{Ahmadian et~al\mbox{.}}{2019}]%
        {ahmadian2019better}
\bibfield{author}{\bibinfo{person}{Sara Ahmadian}, \bibinfo{person}{Ashkan
  Norouzi-Fard}, \bibinfo{person}{Ola Svensson}, {and} \bibinfo{person}{Justin
  Ward}.} \bibinfo{year}{2019}\natexlab{}.
\newblock \showarticletitle{Better guarantees for k-means and euclidean
  k-median by primal-dual algorithms}.
\newblock \bibinfo{journal}{\emph{SIAM J. Comput.}} \bibinfo{volume}{49},
  \bibinfo{number}{4} (\bibinfo{year}{2019}), \bibinfo{pages}{FOCS17--97}.
\newblock


\bibitem[\protect\citeauthoryear{Anuthors}{Anuthors}{2022}]%
        {sourcecode}
\bibfield{author}{\bibinfo{person}{Anonymous Anuthors}.}
  \bibinfo{year}{2022}\natexlab{}.
\newblock \bibinfo{title}{Clustering with fair center representation:
  experimental v1.0}.
\newblock
  \bibinfo{howpublished}{\url{https://github.com/nla2ifyhalispd/div-k-median}}.
\newblock


\bibitem[\protect\citeauthoryear{Arya, Garg, Khandekar, Meyerson, Munagala, and
  Pandit}{Arya et~al\mbox{.}}{2001}]%
        {arya2001local}
\bibfield{author}{\bibinfo{person}{Vijay Arya}, \bibinfo{person}{Naveen Garg},
  \bibinfo{person}{Rohit Khandekar}, \bibinfo{person}{Adam Meyerson},
  \bibinfo{person}{Kamesh Munagala}, {and} \bibinfo{person}{Vinayaka Pandit}.}
  \bibinfo{year}{2001}\natexlab{}.
\newblock \showarticletitle{Local-Search Heuristics for $k$-Median and
  Facility-Location Problems}. In \bibinfo{booktitle}{\emph{STOC}}.
\newblock


\bibitem[\protect\citeauthoryear{Arya, Garg, Khandekar, Meyerson, Munagala, and
  Pandit}{Arya et~al\mbox{.}}{2004}]%
        {arya2004local}
\bibfield{author}{\bibinfo{person}{Vijay Arya}, \bibinfo{person}{Naveen Garg},
  \bibinfo{person}{Rohit Khandekar}, \bibinfo{person}{Adam Meyerson},
  \bibinfo{person}{Kamesh Munagala}, {and} \bibinfo{person}{Vinayaka Pandit}.}
  \bibinfo{year}{2004}\natexlab{}.
\newblock \showarticletitle{Local search heuristics for $k$-median and facility
  location problems}.
\newblock \bibinfo{journal}{\emph{SIAM Journal on computing}}
  \bibinfo{volume}{33}, \bibinfo{number}{3} (\bibinfo{year}{2004}),
  \bibinfo{pages}{544--562}.
\newblock


\bibitem[\protect\citeauthoryear{Bachem, Lucic, and Krause}{Bachem
  et~al\mbox{.}}{2017}]%
        {bachem2017practical}
\bibfield{author}{\bibinfo{person}{Olivier Bachem}, \bibinfo{person}{Mario
  Lucic}, {and} \bibinfo{person}{Andreas Krause}.}
  \bibinfo{year}{2017}\natexlab{}.
\newblock \showarticletitle{Practical coreset constructions for machine
  learning}.
\newblock \bibinfo{journal}{\emph{arXiv preprint arXiv:1703.06476}}
  (\bibinfo{year}{2017}).
\newblock


\bibitem[\protect\citeauthoryear{Backurs, Indyk, Onak, Schieber, Vakilian, and
  Wagner}{Backurs et~al\mbox{.}}{2019}]%
        {backurs2019scalable}
\bibfield{author}{\bibinfo{person}{Arturs Backurs}, \bibinfo{person}{Piotr
  Indyk}, \bibinfo{person}{Krzysztof Onak}, \bibinfo{person}{Baruch Schieber},
  \bibinfo{person}{Ali Vakilian}, {and} \bibinfo{person}{Tal Wagner}.}
  \bibinfo{year}{2019}\natexlab{}.
\newblock \showarticletitle{Scalable fair clustering}. In
  \bibinfo{booktitle}{\emph{ICML}}.
\newblock


\bibitem[\protect\citeauthoryear{Bercea, Gro{\ss}, Khuller, Kumar, R{\"o}sner,
  Schmidt, and Schmidt}{Bercea et~al\mbox{.}}{2019}]%
        {bercea2019cost}
\bibfield{author}{\bibinfo{person}{Ioana~O Bercea}, \bibinfo{person}{Martin
  Gro{\ss}}, \bibinfo{person}{Samir Khuller}, \bibinfo{person}{Aounon Kumar},
  \bibinfo{person}{Clemens R{\"o}sner}, \bibinfo{person}{Daniel~R Schmidt},
  {and} \bibinfo{person}{Melanie Schmidt}.} \bibinfo{year}{2019}\natexlab{}.
\newblock \showarticletitle{On the Cost of Essentially Fair Clusterings}. In
  \bibinfo{booktitle}{\emph{APPROX}}.
\newblock


\bibitem[\protect\citeauthoryear{Berk, Heidari, Jabbari, Kearns, and Roth}{Berk
  et~al\mbox{.}}{2021}]%
        {berk2021fairness}
\bibfield{author}{\bibinfo{person}{Richard Berk}, \bibinfo{person}{Hoda
  Heidari}, \bibinfo{person}{Shahin Jabbari}, \bibinfo{person}{Michael Kearns},
  {and} \bibinfo{person}{Aaron Roth}.} \bibinfo{year}{2021}\natexlab{}.
\newblock \showarticletitle{Fairness in criminal justice risk assessments: The
  state of the art}.
\newblock \bibinfo{journal}{\emph{Sociological Methods \& Research}}
  \bibinfo{volume}{50}, \bibinfo{number}{1} (\bibinfo{year}{2021}),
  \bibinfo{pages}{3--44}.
\newblock


\bibitem[\protect\citeauthoryear{Biddle}{Biddle}{2020}]%
        {biddle2020predicting}
\bibfield{author}{\bibinfo{person}{Justin~B Biddle}.}
  \bibinfo{year}{2020}\natexlab{}.
\newblock \showarticletitle{On predicting recidivism: Epistemic risk,
  tradeoffs, and values in machine learning}.
\newblock \bibinfo{journal}{\emph{Canadian Journal of Philosophy}}
  (\bibinfo{year}{2020}), \bibinfo{pages}{1--21}.
\newblock


\bibitem[\protect\citeauthoryear{Byrka, Pensyl, Rybicki, Srinivasan, and
  Trinh}{Byrka et~al\mbox{.}}{2014}]%
        {byrka2014improved}
\bibfield{author}{\bibinfo{person}{Jaros{\l}aw Byrka}, \bibinfo{person}{Thomas
  Pensyl}, \bibinfo{person}{Bartosz Rybicki}, \bibinfo{person}{Aravind
  Srinivasan}, {and} \bibinfo{person}{Khoa Trinh}.}
  \bibinfo{year}{2014}\natexlab{}.
\newblock \showarticletitle{An improved approximation for k-median, and
  positive correlation in budgeted optimization}. In
  \bibinfo{booktitle}{\emph{SODA}}. SIAM, \bibinfo{pages}{737--756}.
\newblock


\bibitem[\protect\citeauthoryear{Calinescu, Chekuri, Pal, and
  Vondr{\'a}k}{Calinescu et~al\mbox{.}}{2011}]%
        {calinescu2011maximizing}
\bibfield{author}{\bibinfo{person}{Gruia Calinescu}, \bibinfo{person}{Chandra
  Chekuri}, \bibinfo{person}{Martin Pal}, {and} \bibinfo{person}{Jan
  Vondr{\'a}k}.} \bibinfo{year}{2011}\natexlab{}.
\newblock \showarticletitle{Maximizing a monotone submodular function subject
  to a matroid constraint}.
\newblock \bibinfo{journal}{\emph{SIAM J. Comput.}} \bibinfo{volume}{40},
  \bibinfo{number}{6} (\bibinfo{year}{2011}), \bibinfo{pages}{1740--1766}.
\newblock


\bibitem[\protect\citeauthoryear{Charikar, Guha, Éva Tardos, and
  Shmoys}{Charikar et~al\mbox{.}}{2002}]%
        {charikar2002constant}
\bibfield{author}{\bibinfo{person}{Moses Charikar}, \bibinfo{person}{Sudipto
  Guha}, \bibinfo{person}{Éva Tardos}, {and} \bibinfo{person}{David~B.
  Shmoys}.} \bibinfo{year}{2002}\natexlab{}.
\newblock \showarticletitle{A Constant-Factor Approximation Algorithm for the
  k-Median Problem}.
\newblock \bibinfo{journal}{\emph{JCSS}} \bibinfo{volume}{65},
  \bibinfo{number}{1} (\bibinfo{year}{2002}), \bibinfo{pages}{129--149}.
\newblock


\bibitem[\protect\citeauthoryear{Chierichetti, Kumar, Lattanzi, and
  Vassilvitskii}{Chierichetti et~al\mbox{.}}{2017}]%
        {chierichetti2017fair}
\bibfield{author}{\bibinfo{person}{Flavio Chierichetti}, \bibinfo{person}{Ravi
  Kumar}, \bibinfo{person}{Silvio Lattanzi}, {and} \bibinfo{person}{Sergei
  Vassilvitskii}.} \bibinfo{year}{2017}\natexlab{}.
\newblock \showarticletitle{Fair clustering through fairlets}. In
  \bibinfo{booktitle}{\emph{NeurIPS}}.
\newblock


\bibitem[\protect\citeauthoryear{Cohen-Addad, Gupta, Kumar, Lee, and
  Li}{Cohen-Addad et~al\mbox{.}}{2019}]%
        {cohen2019tight}
\bibfield{author}{\bibinfo{person}{Vincent Cohen-Addad},
  \bibinfo{person}{Anupam Gupta}, \bibinfo{person}{Amit Kumar},
  \bibinfo{person}{Euiwoong Lee}, {and} \bibinfo{person}{Jason Li}.}
  \bibinfo{year}{2019}\natexlab{}.
\newblock \showarticletitle{{Tight FPT Approximations for k-Median and
  k-Means}}. In \bibinfo{booktitle}{\emph{ICALP}}, Vol.~\bibinfo{volume}{132}.
  \bibinfo{publisher}{Dagstuhl}, \bibinfo{address}{Dagstuhl, Germany},
  \bibinfo{pages}{42:1--42:14}.
\newblock


\bibitem[\protect\citeauthoryear{Cygan, Dell, Lokshtanov, Marx, Nederlof,
  Okamoto, Paturi, Saurabh, and Wahlstr\"{o}m}{Cygan et~al\mbox{.}}{2016}]%
        {cygan2016onproblems}
\bibfield{author}{\bibinfo{person}{Marek Cygan}, \bibinfo{person}{Holger Dell},
  \bibinfo{person}{Daniel Lokshtanov}, \bibinfo{person}{D\'{a}niel Marx},
  \bibinfo{person}{Jesper Nederlof}, \bibinfo{person}{Yoshio Okamoto},
  \bibinfo{person}{Ramamohan Paturi}, \bibinfo{person}{Saket Saurabh}, {and}
  \bibinfo{person}{Magnus Wahlstr\"{o}m}.} \bibinfo{year}{2016}\natexlab{}.
\newblock \showarticletitle{On Problems as Hard as CNF-SAT}.
\newblock \bibinfo{journal}{\emph{ACM Trans. on Algorithms}}
  \bibinfo{volume}{12}, \bibinfo{number}{3} (\bibinfo{year}{2016}).
\newblock


\bibitem[\protect\citeauthoryear{Cygan, Fomin, Kowalik, Lokshtanov, Marx,
  Pilipczuk, Pilipczuk, and Saurabh}{Cygan et~al\mbox{.}}{2015}]%
        {cygan2015parameterized}
\bibfield{author}{\bibinfo{person}{Marek Cygan}, \bibinfo{person}{Fedor~V
  Fomin}, \bibinfo{person}{{\L}ukasz Kowalik}, \bibinfo{person}{Daniel
  Lokshtanov}, \bibinfo{person}{D{\'a}niel Marx}, \bibinfo{person}{Marcin
  Pilipczuk}, \bibinfo{person}{Micha{\l} Pilipczuk}, {and}
  \bibinfo{person}{Saket Saurabh}.} \bibinfo{year}{2015}\natexlab{}.
\newblock \bibinfo{booktitle}{\emph{Parameterized algorithms}}.
  Vol.~\bibinfo{volume}{4}.
\newblock \bibinfo{publisher}{Springer}.
\newblock


\bibitem[\protect\citeauthoryear{Dua and Graff}{Dua and Graff}{2017}]%
        {dua2019uci}
\bibfield{author}{\bibinfo{person}{Dheeru Dua} {and} \bibinfo{person}{Casey
  Graff}.} \bibinfo{year}{2017}\natexlab{}.
\newblock \bibinfo{title}{{UCI} Machine Learning Repository}.
\newblock
\newblock
\urldef\tempurl%
\url{http://archive.ics.uci.edu/ml}
\showURL{%
\tempurl}


\bibitem[\protect\citeauthoryear{Dwork, Immorlica, Kalai, and Leiserson}{Dwork
  et~al\mbox{.}}{2018}]%
        {dwork2018decoupled}
\bibfield{author}{\bibinfo{person}{Cynthia Dwork}, \bibinfo{person}{Nicole
  Immorlica}, \bibinfo{person}{Adam~Tauman Kalai}, {and} \bibinfo{person}{Max
  Leiserson}.} \bibinfo{year}{2018}\natexlab{}.
\newblock \showarticletitle{Decoupled classifiers for group-fair and efficient
  machine learning}. In \bibinfo{booktitle}{\emph{Conference on fairness,
  accountability and transparency}}. PMLR, \bibinfo{pages}{119--133}.
\newblock


\bibitem[\protect\citeauthoryear{Feldman and Langberg}{Feldman and
  Langberg}{2011}]%
        {feldman2011unified}
\bibfield{author}{\bibinfo{person}{Dan Feldman} {and} \bibinfo{person}{Michael
  Langberg}.} \bibinfo{year}{2011}\natexlab{}.
\newblock \showarticletitle{A Unified Framework for Approximating and
  Clustering Data}. In \bibinfo{booktitle}{\emph{STOC}}.
  \bibinfo{publisher}{ACM}, \bibinfo{pages}{569–578}.
\newblock


\bibitem[\protect\citeauthoryear{Fu, Xian, Gao, Zhao, Huang, Ge, Xu, Geng,
  Shah, Zhang, et~al\mbox{.}}{Fu et~al\mbox{.}}{2020}]%
        {fu2020fairness}
\bibfield{author}{\bibinfo{person}{Zuohui Fu}, \bibinfo{person}{Yikun Xian},
  \bibinfo{person}{Ruoyuan Gao}, \bibinfo{person}{Jieyu Zhao},
  \bibinfo{person}{Qiaoying Huang}, \bibinfo{person}{Yingqiang Ge},
  \bibinfo{person}{Shuyuan Xu}, \bibinfo{person}{Shijie Geng},
  \bibinfo{person}{Chirag Shah}, \bibinfo{person}{Yongfeng Zhang},
  {et~al\mbox{.}}} \bibinfo{year}{2020}\natexlab{}.
\newblock \showarticletitle{Fairness-aware explainable recommendation over
  knowledge graphs}. In \bibinfo{booktitle}{\emph{SIGIR}}.
\newblock


\bibitem[\protect\citeauthoryear{Guha and Khuller}{Guha and Khuller}{1998}]%
        {guha1998greedy}
\bibfield{author}{\bibinfo{person}{Sudipto Guha} {and} \bibinfo{person}{Samir
  Khuller}.} \bibinfo{year}{1998}\natexlab{}.
\newblock \showarticletitle{Greedy Strikes Back: Improved Facility Location
  Algorithms}. In \bibinfo{booktitle}{\emph{SODA}}. \bibinfo{publisher}{SIAM},
  \bibinfo{address}{USA}, \bibinfo{pages}{649–657}.
\newblock


\bibitem[\protect\citeauthoryear{Hajiaghayi, Khandekar, and
  Kortsarz}{Hajiaghayi et~al\mbox{.}}{2010}]%
        {hajiaghayi2010budgeted}
\bibfield{author}{\bibinfo{person}{MohammadTaghi Hajiaghayi},
  \bibinfo{person}{Rohit Khandekar}, {and} \bibinfo{person}{Guy Kortsarz}.}
  \bibinfo{year}{2010}\natexlab{}.
\newblock \showarticletitle{Budgeted Red-Blue Median and Its Generalizations}.
  In \bibinfo{booktitle}{\emph{ESA}}.
\newblock


\bibitem[\protect\citeauthoryear{Hajiaghayi, Khandekar, and
  Kortsarz}{Hajiaghayi et~al\mbox{.}}{2012}]%
        {hajiaghayi2012local}
\bibfield{author}{\bibinfo{person}{M Hajiaghayi}, \bibinfo{person}{Rohit
  Khandekar}, {and} \bibinfo{person}{Guy Kortsarz}.}
  \bibinfo{year}{2012}\natexlab{}.
\newblock \showarticletitle{Local-search algorithms for the red-blue median
  problem}.
\newblock \bibinfo{journal}{\emph{Algorithmica}} \bibinfo{volume}{63},
  \bibinfo{number}{4} (\bibinfo{year}{2012}), \bibinfo{pages}{795--814}.
\newblock


\bibitem[\protect\citeauthoryear{Hardt, Price, and Srebro}{Hardt
  et~al\mbox{.}}{2016}]%
        {hardt2016equality}
\bibfield{author}{\bibinfo{person}{Moritz Hardt}, \bibinfo{person}{Eric Price},
  {and} \bibinfo{person}{Nati Srebro}.} \bibinfo{year}{2016}\natexlab{}.
\newblock \showarticletitle{Equality of opportunity in supervised learning}. In
  \bibinfo{booktitle}{\emph{NeurIPS}}.
\newblock


\bibitem[\protect\citeauthoryear{Huang, Jiang, and Vishnoi}{Huang
  et~al\mbox{.}}{2019}]%
        {huang2019coresets}
\bibfield{author}{\bibinfo{person}{Lingxiao Huang}, \bibinfo{person}{Shaofeng
  Jiang}, {and} \bibinfo{person}{Nisheeth Vishnoi}.}
  \bibinfo{year}{2019}\natexlab{}.
\newblock \showarticletitle{Coresets for clustering with fairness constraints}.
  In \bibinfo{booktitle}{\emph{NeurIPS}}.
\newblock


\bibitem[\protect\citeauthoryear{Impagliazzo and Paturi}{Impagliazzo and
  Paturi}{2001}]%
        {impagliazzo2001on}
\bibfield{author}{\bibinfo{person}{Russell Impagliazzo} {and}
  \bibinfo{person}{Ramamohan Paturi}.} \bibinfo{year}{2001}\natexlab{}.
\newblock \showarticletitle{On the Complexity of $k$-SAT}.
\newblock \bibinfo{journal}{\emph{JCSS}} \bibinfo{volume}{62},
  \bibinfo{number}{2} (\bibinfo{year}{2001}), \bibinfo{pages}{367--375}.
\newblock


\bibitem[\protect\citeauthoryear{Kanungo, Mount, Netanyahu, Piatko, Silverman,
  and Wu}{Kanungo et~al\mbox{.}}{2004}]%
        {kanungo2004local}
\bibfield{author}{\bibinfo{person}{Tapas Kanungo}, \bibinfo{person}{David~M
  Mount}, \bibinfo{person}{Nathan~S Netanyahu}, \bibinfo{person}{Christine~D
  Piatko}, \bibinfo{person}{Ruth Silverman}, {and} \bibinfo{person}{Angela~Y
  Wu}.} \bibinfo{year}{2004}\natexlab{}.
\newblock \showarticletitle{A local search approximation algorithm for k-means
  clustering}.
\newblock \bibinfo{journal}{\emph{Computational Geometry}}
  \bibinfo{volume}{28}, \bibinfo{number}{2-3} (\bibinfo{year}{2004}),
  \bibinfo{pages}{89--112}.
\newblock


\bibitem[\protect\citeauthoryear{Krishnaswamy, Kumar, Nagarajan, Sabharwal, and
  Saha}{Krishnaswamy et~al\mbox{.}}{2011}]%
        {krishnaswamy2011matroid}
\bibfield{author}{\bibinfo{person}{Ravishankar Krishnaswamy},
  \bibinfo{person}{Amit Kumar}, \bibinfo{person}{Viswanath Nagarajan},
  \bibinfo{person}{Yogish Sabharwal}, {and} \bibinfo{person}{Barna Saha}.}
  \bibinfo{year}{2011}\natexlab{}.
\newblock \showarticletitle{The matroid median problem}. In
  \bibinfo{booktitle}{\emph{SODA}}.
\newblock


\bibitem[\protect\citeauthoryear{Raghavan and Tompson}{Raghavan and
  Tompson}{1987}]%
        {Raghavan1987rounding}
\bibfield{author}{\bibinfo{person}{Prabhakar Raghavan} {and}
  \bibinfo{person}{Clark~D. Tompson}.} \bibinfo{year}{1987}\natexlab{}.
\newblock \showarticletitle{Randomized Rounding: A Technique for Provably Good
  Algorithms and Algorithmic Proofs}.
\newblock \bibinfo{journal}{\emph{Combinatorica}} \bibinfo{volume}{7},
  \bibinfo{number}{4} (\bibinfo{date}{dec} \bibinfo{year}{1987}),
  \bibinfo{pages}{365–374}.
\newblock
\showISSN{0209-9683}
\urldef\tempurl%
\url{https://doi.org/10.1007/BF02579324}
\showDOI{\tempurl}


\bibitem[\protect\citeauthoryear{R{\"o}sner and Schmidt}{R{\"o}sner and
  Schmidt}{2018}]%
        {rosner2018privacy}
\bibfield{author}{\bibinfo{person}{Clemens R{\"o}sner} {and}
  \bibinfo{person}{Melanie Schmidt}.} \bibinfo{year}{2018}\natexlab{}.
\newblock \showarticletitle{Privacy Preserving Clustering with Constraints}. In
  \bibinfo{booktitle}{\emph{ICALP}}.
\newblock


\bibitem[\protect\citeauthoryear{S., Laekhanukit, and Manurangsi}{S.
  et~al\mbox{.}}{2019}]%
        {karthik2019on}
\bibfield{author}{\bibinfo{person}{Karthik~C. S.}, \bibinfo{person}{Bundit
  Laekhanukit}, {and} \bibinfo{person}{Pasin Manurangsi}.}
  \bibinfo{year}{2019}\natexlab{}.
\newblock \showarticletitle{On the Parameterized Complexity of Approximating
  Dominating Set}.
\newblock \bibinfo{journal}{\emph{Journal of ACM}} \bibinfo{volume}{66},
  \bibinfo{number}{5} (\bibinfo{year}{2019}), \bibinfo{numpages}{38}~pages.
\newblock


\bibitem[\protect\citeauthoryear{Schmidt, Schwiegelshohn, and Sohler}{Schmidt
  et~al\mbox{.}}{2019}]%
        {schmidt2019fair}
\bibfield{author}{\bibinfo{person}{Melanie Schmidt}, \bibinfo{person}{Chris
  Schwiegelshohn}, {and} \bibinfo{person}{Christian Sohler}.}
  \bibinfo{year}{2019}\natexlab{}.
\newblock \showarticletitle{Fair Coresets and Streaming Algorithms for Fair
  $k$-means}. In \bibinfo{booktitle}{\emph{WAOA}}.
\newblock


\bibitem[\protect\citeauthoryear{Thejaswi, Ordozgoiti, and Gionis}{Thejaswi
  et~al\mbox{.}}{2021a}]%
        {thejaswi2021diversity-arxiv}
\bibfield{author}{\bibinfo{person}{Suhas Thejaswi}, \bibinfo{person}{Bruno
  Ordozgoiti}, {and} \bibinfo{person}{Aristides Gionis}.}
  \bibinfo{year}{2021}\natexlab{a}.
\newblock \bibinfo{title}{Diversity-aware $k$-median : Clustering with fair
  center representation}.
\newblock
\newblock
\showeprint[arxiv]{2106.11696}~[cs.DS]


\bibitem[\protect\citeauthoryear{Thejaswi, Ordozgoiti, and Gionis}{Thejaswi
  et~al\mbox{.}}{2021b}]%
        {thejaswi2021diversity}
\bibfield{author}{\bibinfo{person}{Suhas Thejaswi}, \bibinfo{person}{Bruno
  Ordozgoiti}, {and} \bibinfo{person}{Aristides Gionis}.}
  \bibinfo{year}{2021}\natexlab{b}.
\newblock \showarticletitle{Diversity-aware $k$-median: Clustering with fair
  center representation}. In \bibinfo{booktitle}{\emph{ECML-PKDD}}.
  \bibinfo{publisher}{Springer}, \bibinfo{pages}{1--16}.
\newblock


\bibitem[\protect\citeauthoryear{Vassilvitskii and Arthur}{Vassilvitskii and
  Arthur}{2006}]%
        {vassilvitskii2006kmeans}
\bibfield{author}{\bibinfo{person}{Sergei Vassilvitskii} {and}
  \bibinfo{person}{David Arthur}.} \bibinfo{year}{2006}\natexlab{}.
\newblock \showarticletitle{k-means++: The advantages of careful seeding}. In
  \bibinfo{booktitle}{\emph{SODA}}. \bibinfo{pages}{1027--1035}.
\newblock


\bibitem[\protect\citeauthoryear{Zafar, Valera, Gomez~Rodriguez, and
  Gummadi}{Zafar et~al\mbox{.}}{2017}]%
        {zafar2017fairness}
\bibfield{author}{\bibinfo{person}{Muhammad~Bilal Zafar},
  \bibinfo{person}{Isabel Valera}, \bibinfo{person}{Manuel Gomez~Rodriguez},
  {and} \bibinfo{person}{Krishna~P Gummadi}.} \bibinfo{year}{2017}\natexlab{}.
\newblock \showarticletitle{Fairness beyond disparate treatment \& disparate
  impact: Learning classification without disparate mistreatment}. In
  \bibinfo{booktitle}{\emph{WWW}}. \bibinfo{pages}{1171--1180}.
\newblock


\end{thebibliography}

\appendix
\clearpage
\section{Proofs} \label{app:proofs}

\subsection{Scalability experiments}
In Figure~\ref{fig:scaling:bicriteria:2} we report the scalability of bicriteria approximation
algorithms for \divkmeans problem.

\begin{figure*}
\includegraphics[width=0.8\linewidth]{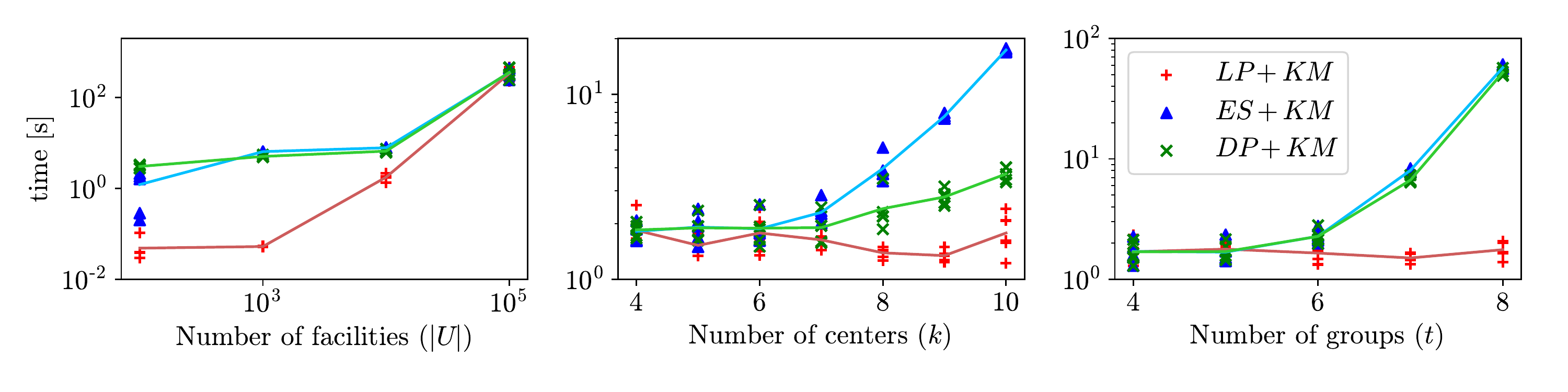}
\caption{\label{fig:scaling:bicriteria:2}Scalability of bicriteria algorithm for \divkmeans.}
\end{figure*}

\subsection{Proof of Lemma~\ref{lemma:partition}} \label{app:fptapx}

In fact, we prove Theorem~\ref{theorem:mainfptapx}. We primarily focus on
\divkmedian, indicating the parts of the proof for \divkmeans. In essence, to
achieve the results for \divkmeans, we need to consider squared distances which
results in the claimed approximation ratio with same runtime bounds.

As mentioned before, to get a better approximation factor, the idea is to reduce
the problem of finding an optimal solution to \kmedianpm to the problem of
maximizing a monotone submodular function. To this end, for each $S \subseteq
\pazocal{F}$, we define the submodular function $\impr(S)$ that, in a way,
captures the cost of selecting $S$ as our solution. To define the function
$\impr$, we add a fictitious facility $F'_i$, for each $i \in [k]$ such that
$F'_i$ is at a distance $2\lambda^*_i$ for each facility in $\Pi_i$. We, then,
use the triangle inequality to compute the distance of $F'_i$ to all other
nodes.
Then, using an $(1-1/e)$-approximation algorithm (Line~12), we approximate
$\impr$. Finally, we return the set that has the minimum \kmedian cost over all
iterations.

\begin{algorithm}
\footnotesize
\caption{\sc ${\kmedianpm(J=((U,d),\{E_1,\cdots,E_k\},C'), \epsilon')}$}
\label{algo:kmedpm}
\KwIn{$J$, an instance of the \kmedianpm problem\\
\Indp \Indp ~$\epsilon'$, a real number}

\KwOut{$S^*$, a subset of facilities}

$S^* \leftarrow \emptyset$,
$\eta \leftarrow e\epsilon'/2$\\
\ForEach{ordered multiset $\{c'_{i_1},\cdots, c'_{i_k}\} \subseteq C'$ of size $k$} {
  \ForEach{ordered multiset $\Lambda = \{\lambda_{i_1},\cdots, \lambda_{i_k}\}$ such that $ \lambda_{i_1} \subseteq [[\Delta]_\eta]$} {
    \For{$j=1$ to $k$} {
      $\Pi_j \leftarrow \{f \in E'_j \mid d_D(f,c'_{i_j}) = \lambda_{i_j} \}$ \label{algo:kmedpm:pi}\\
      Add a fictitious facility $F'_j$ \\
      \For{$f \in \Pi_j$} {
        $d(F'_j,f) \gets 2\lambda_{i_j}$
      }
      \For{$f \notin \Pi_j$} {
        $d(F'_j,v) \gets 2\lambda_{i_j} + \min_{f \in \Pi_j} d(f,v)$
      }
    }
    \For{$S \subseteq F$, define $\textsf{improve}(S) := \textsf{cost}(C',F') - \textsf{cost}(C',F' \cup S)$} {
      $S_{max} \leftarrow S \subseteq  \pazocal{F}$ that maximizes
      $\textsf{improve}(S)$ s.t. $|S \cap \Pi_j|=1, \forall j\in[k]$\\
      \If{$\textsf{cost}(C',S_{max}) < \textsf{cost}(C',S^*)$} {
        $S^* \leftarrow S$
      }
    }
  }
}
\Return $S^*$
\end{algorithm}

\textbf{Correctness: }
Given $I=\divkins$ . Let $\mathcal{J} :=\{J_\ell\}$ be the instances of
\kmedianpm generated by Algorithm~\ref{algo:divkmed} at Line~9 (For simplicity,
we do not consider client coreset  $C'$ here). For correctness, we show that
$F^* = \{f^*_i\}_{i \in [k]} \subseteq \pazocal{F}$ is feasible to $I$ if and
only if there exists $J_{\ell^*} =  ((V,d),\{E^*_{1},\cdots, E^*_{k}\},C) \in
\mathcal{J}$ such that $F^*$ is feasible to $J_{\ell^*}$. This is sufficient,
since the objective function of both the problems is same, and hence returning
minimum over $\mathcal{J}$ of an optimal (approximate) solution obtains an
optimal (approximate resp.) solution to $I$.
We need the following proposition for the proof.
\begin{proposition} \label{cl:setfacvec}
For all $E \in \mathcal{E}$, and for all $f \in E$, we have $\vec{\chi}_f = \vec{\chi}_E$.
\end{proposition}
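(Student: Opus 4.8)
The plan is to treat Proposition~\ref{cl:setfacvec} as an immediate consequence of how the partition $\charpart$ is built in Section~\ref{sec:algorithm:finding}, so that the argument reduces to unwinding definitions rather than any real computation. First I would recall that every block of $\charpart$ has the form $\patternset{\vec{\gamma}} = \{f \in F : \vec{\chi}_f = \vec{\gamma}\}$ for some $\vec{\gamma} \in \{0,1\}^t$, and that as $\vec{\gamma}$ ranges over $\{0,1\}^t$ these blocks partition $F$. In particular the map $\vec{\gamma} \mapsto \patternset{\vec{\gamma}}$ assigns to each nonempty block a single bit vector, which is exactly what the notation $\vec{\chi}_E$ is meant to record: for $E = \patternset{\vec{\gamma}} \in \charpart$ we set $\vec{\chi}_E := \vec{\gamma}$.

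Given this, the statement is immediate. Fixing any $E \in \charpart$ and writing $E = \patternset{\vec{\gamma}}$ with $\vec{\gamma} = \vec{\chi}_E$, the defining membership condition of $\patternset{\vec{\gamma}}$ says that every facility $f \in E$ satisfies $\vec{\chi}_f = \vec{\gamma}$, and hence $\vec{\chi}_f = \vec{\chi}_E$, which is precisely the claim.

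The one point needing care — and the closest thing to an obstacle — is verifying that $\vec{\chi}_E$ is well defined, i.e. that no block equals $\patternset{\vec{\gamma}}$ for two distinct vectors $\vec{\gamma} \neq \vec{\gamma}'$. This holds because each facility has exactly one characteristic vector, so the sets $\{\patternset{\vec{\gamma}}\}_{\vec{\gamma} \in \{0,1\}^t}$ are pairwise disjoint and every nonempty block determines its vector uniquely; empty blocks contain no facilities and so satisfy the statement vacuously. With $\vec{\chi}_E$ thus unambiguous, the proposition follows. I expect this result to be used downstream only to certify that the constraint pattern of a $k$-multiset $\{\patternset{\vec{\gamma}_1},\dots,\patternset{\vec{\gamma}_k}\}$ depends solely on the chosen blocks: any selection of facilities $f_j \in \patternset{\vec{\gamma}_j}$ yields $\sum_{j} \vec{\chi}_{f_j} = \sum_{j} \vec{\gamma}_j$, so feasibility of the pattern is preserved regardless of which representative is picked.
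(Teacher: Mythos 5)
Your proof is correct and follows the same route as the paper's: write $E = \patternset{\vec{\gamma}}$ for the defining $\vec{\gamma}$, identify $\vec{\chi}_E = \vec{\gamma}$, and read off $\vec{\chi}_f = \vec{\gamma}$ from the membership condition. Your extra remark on the well-definedness of $\vec{\chi}_E$ is a harmless refinement the paper leaves implicit.
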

\begin{proof}
Fix $E \in \mathcal{E}$ and $f \in E$. Since $E \in \mathcal{E}$, there exists
$\vec{\gamma} \in \{0,1\}^t$ such that $E_{\vec{\gamma}} = E$. But this means
$\vec{\chi}_{E} = \vec{\gamma}$. On the other hand, since $f \in
E_{\vec{\gamma}}$, we have that $\vec{\chi}_f = \vec{\gamma}$.
\end{proof}
Suppose $F^* = \{f^*_i\}_{i \in [k]} \subseteq \pazocal{F}$ is a feasible solution to $I: $
$
\sum_{i \in [k]} \vec{\chi}_{f^*_i} \ge \vec{r}.
$
Then, consider the instance $J_{\ell^*} =  ((V,d),\{E^*_{1},\cdots,
E^*_{k}\},C)$  with $E^*_i = E_{\vec{\chi}_{f^*_i}}$, for all $i \in [k]$.
Since, 
\[
\sum_{i \in [k]} \vec{\chi}_{E^*_i} = \sum_{i \in [k]} \vec{\chi}_{f^*_i} \ge \vec{r}
\]
we have that $J_{\ell^*} \in  \mathcal{J}$.
Further, $F^*$ is feasible to $J_{\ell^*}$ since $F^* \cap E^*_{i} = f^*_i$ for all $i \in [k]$.
For the other direction, fix an instance $J_{\ell^*} =  ((V,d),\{E^*_{1},\cdots,
E^*_{k}\},C) \in \mathcal{J}$ and a feasible solution $F^* =\{f^*_i\}_{i \in
[k]}$ for $J_{\ell^*}$. From Claim~\ref{cl:setfacvec} and the feasiblity of
$F^*$, we have $\vec{\chi}_{f^*_i} = \vec{\chi}_{E^*_i}$. Hence,
\[
 \sum_{i \in [k]} \vec{\chi}_{f^*_i} = \sum_{i \in [k]} \vec{\chi}_{E^*_i}  \ge \vec{r}
\]
which implies $F^* =\{f^*_i\}_{i \in [k]}$ is a feasible solution to $I$. To
complete the proof, we need to show that the distance function defined in
Line~10 is a metric, and \impr function defined in Line~11 is a monotone
submodular function. Both these proofs are the same as that in
\cite{cohen2019tight}. 

\textbf{Approximation Factor: } For $I=\divkins$, let
$I'=((U,d),C',F,\mathcal{G}, \vec{r},k)$  be the instance with client coreset
$C'$. 
Let $F^* \subseteq \pazocal{F}$ be an optimal solution to $I$, and let
$\Tilde{F}^* \subseteq \pazocal{F}$ be an optimal solution to $I'$. Then, from
\coreset Lemma~\ref{sec:algoritheorem:coresets}, we have that
\[
(1-\nu) \cdot \textsf{cost}(F^*,C) \le \textsf{cost}(F^*,C') \le (1+\nu) \cdot \textsf{cost}(F^*,C).
\]
The following proposition, whose proof closely follows that
in~\cite{cohen2019tight}, bounds the approximation factor of
Algorithm~\ref{algo:divkmed}.

\begin{proposition} \label{lem:fptapxpm}
For $\epsilon' >0$, let $J_\ell=((V,d),\{E_1,\cdots,E_k\},C',\epsilon')$ be an input to Algorithm~\ref{algo:kmedpm}, and let $S^*_\ell$ be the set returned. Then,
\[
\textsf{cost}(C',S^*_\ell) \le (1+2/e + \epsilon') \cdot \textsf{OPT}(J_\ell),
\]
where $\textsf{OPT}(J_\ell)$ is the optimal cost of \kmedianpm on $J_\ell$. Similarly, for \divkmeans,
\[
\textsf{cost}(C',S^*_\ell) \le (1+8/e + \epsilon') \cdot \textsf{OPT}(J_\ell),
\]
\end{proposition}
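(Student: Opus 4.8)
The plan is to analyze the single ``correct'' iteration of Algorithm~\ref{algo:kmedpm} in which the guessed leaders and radii coincide with the optimal ones, and show that already in this iteration the set found meets the claimed bound; since the algorithm reports the minimum-cost $S_{max}$ over all iterations, this suffices. Concretely, let $F^*=\{f^*_j\}_{j\in[k]}$ be an optimal solution to $J_\ell$, so $\textsf{OPT}(J_\ell)=\textsf{cost}(C',F^*)$, let $c^*_j\in C'$ be the closest coreset client to $f^*_j$, and let $\lambda^*_j=d(f^*_j,c^*_j)$. Fix the iteration whose leaders are $\{c^*_j\}_j$ and whose radii are the discretized values $[\lambda^*_j]_\eta$. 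Since $\Pi_j$ collects the facilities of $E_j$ whose (discretized) distance to $c^*_j$ equals the guessed radius, we have $f^*_j\in\Pi_j$, and because the $E_j$ are disjoint the set $F^*$ is a feasible independent set of the partition matroid $\{\,|S\cap\Pi_j|=1\,\}_j$.

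Two structural facts are inherited verbatim from Cohen-Addad et al.~\cite{cohen2019tight}: the augmented distance built with the fictitious facilities $F'_j$ is a metric, and $\textsf{improve}(S)=\textsf{cost}(C',F')-\textsf{cost}(C',F'\cup S)$ is monotone and submodular. Granting these, I would establish three inequalities. First, a baseline bound: for a client $c$ served by $f^*_j$ we have $d(c,F'_j)\le 2\lambda_{i_j}+d(f^*_j,c)$, and combining $\lambda_{i_j}\le(1+\eta)\lambda^*_j$ with $\lambda^*_j\le d(c,f^*_j)$ (as $c^*_j$ is the closest client to $f^*_j$) gives $\textsf{cost}(C',F')\le(3+2\eta)\,\textsf{OPT}(J_\ell)$ in the median case. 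Second, a lower bound on the attainable improvement: since $\textsf{cost}(C',F'\cup F^*)\le\textsf{cost}(C',F^*)=\textsf{OPT}(J_\ell)$, the feasible set $F^*$ satisfies $\textsf{improve}(F^*)\ge\textsf{cost}(C',F')-\textsf{OPT}(J_\ell)$. Third, by the $(1-1/e)$-approximation for monotone submodular maximization under a matroid constraint~\cite{calinescu2011maximizing}, the returned $S_{max}$ satisfies $\textsf{improve}(S_{max})\ge(1-1/e)\,\textsf{improve}(F^*)$.

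The crucial remaining step, which I expect to be the main obstacle, is to remove the fictitious facilities from the accounting, i.e. to show $\textsf{cost}(C',S_{max})=\textsf{cost}(C',F'\cup S_{max})$. The key observation is that for each $j$ the unique real facility $f_j\in S_{max}\cap\Pi_j$ dominates $F'_j$ from every client's viewpoint: for any client $c$, $d(F'_j,c)=2\lambda_{i_j}+\min_{f\in\Pi_j}d(f,c)\ge\lambda_{i_j}+d(c,c^*_j)\ge d(c,f_j)$, where the middle inequality uses that every facility of $\Pi_j$ lies at distance $\lambda_{i_j}$ from $c^*_j$ and the last uses the triangle inequality. Hence no client is ever served more cheaply by a fictitious facility, so the two costs coincide. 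Combining, $\textsf{cost}(C',S_{max})=\textsf{cost}(C',F')-\textsf{improve}(S_{max})\le\tfrac1e\textsf{cost}(C',F')+(1-\tfrac1e)\textsf{OPT}(J_\ell)\le\big(\tfrac{3}{e}+1-\tfrac1e+\tfrac{2\eta}{e}\big)\textsf{OPT}(J_\ell)=(1+\tfrac2e+\epsilon')\,\textsf{OPT}(J_\ell)$ upon setting $\eta=e\epsilon'/2$.

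For \divkmeans the argument is identical except that distances are squared. The domination inequality $d(c,f_j)\le d(F'_j,c)$ is a metric statement and survives squaring, while the baseline bound becomes $\textsf{cost}(C',F')\le(9+O(\eta))\,\textsf{OPT}(J_\ell)$, since $d(c,F'_j)\le(3+2\eta)\,d(c,f^*_j)$ yields $d(c,F'_j)^2\le(9+O(\eta))\,d(c,f^*_j)^2$; the same combination then gives the factor $1+\tfrac8e+\epsilon'$. I would defer the verifications that the augmented $d(\cdot,\cdot)$ is a metric and that $\textsf{improve}$ is submodular to~\cite{cohen2019tight}, as neither property is affected by the partition-matroid constraint.
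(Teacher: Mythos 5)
Your proof follows essentially the same route as the paper's: fix the iteration with the optimal leaders and (discretized) radii, invoke the $(1-1/e)$-guarantee for monotone submodular maximization under the partition-matroid constraint, bound $\textsf{cost}(C',F') \le (3+2\eta)\cdot\textsf{OPT}(J_\ell)$ via the triangle inequality, and combine with $\eta = e\epsilon'/2$ (squaring distances for the $k$-means case). The one step you single out as the main obstacle --- that $\textsf{cost}(C',S_{max}) = \textsf{cost}(C',F'\cup S_{max})$ because the chosen real facility in each $\Pi_j$ dominates the fictitious $F'_j$ --- is asserted without justification in the paper, so your explicit domination argument is a welcome addition rather than a deviation.
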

This allows us to bound the approximation factor of Algorithm~\ref{algo:divkmed}.
\begin{align*}
\textsf{cost}(T^*, C') 
=   \min_{J_\ell \in \mathcal{J}}  \textsf{cost}(S^*_\ell,C') 
\le  (1+2/e+\epsilon') \cdot \textsf{cost}(\Tilde{F}^*,C') \\
\le (1+2/e +\epsilon') \cdot \textsf{cost}(F^*,C')
\le (1+2/e +\epsilon')(1+ \nu) \cdot \textsf{cost}(F^*,C).
\end{align*}
On the other hand, we have $\textsf{cost}(T^*,C) \le (1+2\nu) \cdot \textsf{cost}(T^*, C')$. Hence, using $\epsilon'=\epsilon/4$ and $\nu = \epsilon/16$, we have
\begin{align*}
\textsf{cost}(T^*,C) &\le (1+2/e +\epsilon')(1+ \nu) (1+2\nu) \cdot \textsf{cost}(F^*,C) \\
&\le (1+2/e+\epsilon)\cdot \textsf{cost}(F^*,C)
\end{align*}
for $\epsilon\le 1/2$. Analogous calculations holds for \divkmeans. This finishes the proof of Lemma~\ref{lemma:partition}.
Now, we prove Proposition~\ref{lem:fptapxpm}.

\begin{proof}
Let $F^*_\ell$ be an optimal solution to $J_\ell$. Then, since
$\textsf{cost}(C',F'\cup F^*_\ell) = \textsf{cost}(C',F^*_\ell)$, we have that
$F^*_\ell$ is a maximizer of the function $\impr(\cdot)$, defined at Line~11.
Hence due to submodular optimization, we have that
\[
\impr(S^*_\ell) \ge (1-1/e) \cdot \impr(F^*_\ell).
\]
Thus,
\begin{align*}
    \cost(C',S^*_\ell) &= \cost(C',F' \cup S^*_\ell) \\
    &= \cost(C',F') - \impr(S^*_\ell)\\
    &\le  \cost(C',F') - (1-1/e) \cdot \impr(F^*_\ell)\\
    &= \cost(C',F') - (1-1/e) \cdot \left( \cost(C',F') - \cost(C',F^*_\ell) \right) \\
    &= 1/e \cdot \cost(C',F') + (1-1/e) \cdot \cost(C',F^*_\ell)
\end{align*}
\end{proof}

The following proposition bounds $\cost(C',F')$ in terms of $\cost(C',F^*_\ell)$.
\begin{proposition} 
 $\cost(C',F') \le (3+2\eta) \cdot \cost(C',F^*_\ell)$.
\end{proposition}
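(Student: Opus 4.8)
The plan is to bound $\cost(C',F')$ \emph{pointwise}: for every client $c\in C'$ I will exhibit one fictitious facility that lies close to $c$, charge $d(c,F')$ against $d(c,F^*_\ell)$ up to the factor $(3+2\eta)$, and then sum the per-client bound with the coreset weights $w_c$. Since every client is charged to the single facility of $F^*_\ell$ that serves it, no case analysis is needed.

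First I would fix the iteration of Algorithm~\ref{algo:kmedpm} in which the guessed leaders and radii are the optimal ones, i.e.\ $c_{i_j}=c^*_j$ is the client of $C'$ nearest to $f^*_j$ and $\lambda_{i_j}$ is the discretized value of $\lambda^*_j=d(f^*_j,c^*_j)$. As already argued in the proof of Lemma~\ref{lemma:partition}, in this iteration each optimal facility satisfies $f^*_j\in\Pi_j$ (line~\ref{algo:kmedpm:pi}). Consequently the fictitious facility $F'_j$ added for this group is, by the construction of its distances, at distance $2\lambda_{i_j}$ from $f^*_j$, and its distance to an arbitrary client $c$ is obtained by the triangle inequality through $\Pi_j$, giving $d(F'_j,c)=2\lambda_{i_j}+\min_{f\in\Pi_j}d(f,c)\le 2\lambda_{i_j}+d(f^*_j,c)$.

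Next I would charge each client to the facility serving it in $F^*_\ell$. Let $c\in C'$ be served by $f^*_j$, so that $d(c,F^*_\ell)=d(c,f^*_j)$. Applying the previous estimate for this same index $j$, $d(c,F')\le d(c,F'_j)\le 2\lambda_{i_j}+d(f^*_j,c)$. The only remaining ingredient is to control the discretized radius. Since $c^*_j$ is the client of $C'$ nearest to $f^*_j$, we have $\lambda^*_j=d(f^*_j,c^*_j)\le d(f^*_j,c)$; and the discretization guarantees $\lambda_{i_j}\le(1+\eta)\lambda^*_j$, because the smallest power of $(1+\eta)$ that is no smaller than $\lambda^*_j$ overshoots it by at most a factor $1+\eta$. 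Combining these, $d(c,F')\le 2(1+\eta)\,d(f^*_j,c)+d(f^*_j,c)=(3+2\eta)\,d(c,F^*_\ell)$.

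Finally, multiplying by $w_c$ and summing over all $c\in C'$ yields $\cost(C',F')\le(3+2\eta)\,\cost(C',F^*_\ell)$, as claimed. The delicate points I would state most carefully are exactly the two facts fed into the single triangle-inequality step: that $f^*_j$ genuinely lies in $\Pi_j$ in the chosen iteration, and the discretization estimate $\lambda_{i_j}\le(1+\eta)\lambda^*_j$; everything else is routine and mirrors the corresponding argument in~\cite{cohen2019tight}.
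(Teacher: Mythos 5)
Your argument is correct and is essentially the paper's own proof: both fix the iteration with the optimal leaders and discretized radii, use that $f^*_j\in\Pi_j$ to bound $d(c,F'_j)$ by $2\lambda_{i_j}+d(f^*_j,c)$, control the radius via the leader being the nearest coreset client (so $\lambda^*_j\le d(f^*_j,c)$) together with the $(1+\eta)$ discretization loss, and then sum the pointwise bound $(3+2\eta)\,d(c,F^*_\ell)$ over the weighted clients. The only cosmetic difference is that you invoke the explicit distance formula for the fictitious facility where the paper writes the same step as a triangle inequality through $f^*_{\ell_j}$.
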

Setting $\eta = \frac{e}{2} \epsilon'$, finishes the proof  for \divkmedian,
\begin{align*}
 \cost(C',S^*_\ell) &\le (3+ e \epsilon')/e \cdot \cost(C',F^*_\ell) + (1-1/e)
\cdot \cost(C',F^*_\ell)\\ &\le (1+2/e+\epsilon') \cdot \cost(C',F^*_\ell).
\end{align*}
For \divkmeans, setting $\eta = \frac{e}{16} \epsilon'$, finishes the proof of Proposition~\ref{lem:fptapxpm},
\begin{align*}
\cost(C',S^*_\ell) &\le (3+ 2e\epsilon'/16  )^2/e \cdot \cost(C',F^*_\ell) + (1-1/e)
 \cost(C',F^*_\ell)\\ &\le (1+8/e+\epsilon') \cdot \cost(C',F^*_\ell).
\end{align*}
for $\eta \le 1$.
%
\begin{proof}
To this end, it is sufficient to prove that for any client $c' \in C'$, it holds
that $d(c',F') \le (3+2\eta) d(c',F^*_\ell)$. Fix $c' \in C$, and let
$f^*_{\ell_j} \in F^*_\ell$ be the closest facility in $F^*_\ell$ with
$\lambda^*_j$ such that $\lambda^*_j = d(c'_{i_j},f^*_{\ell_j})$. Now,
\[
d(c',f^*_{\ell_j}) \ge d(c'_{i_j},f^*_{\ell_j}) \ge (1+\eta)^{([\lambda^*_j]_D -1)} \ge \frac{\lambda^*_j}{1+\eta}.
\]
Using, triangle inequality and the above equation, we have,
\begin{align*}
d(c',F') &\le d(c,F'_j) \le  d(c',F^*_{\ell_j}) + d(F^*_\ell,F') \le
d(c',f^*_{\ell_j}) + 2 \lambda^*_j \\&\le (3+2\eta)  d(c',f^*_{\ell_j}).
\end{align*}
\end{proof}

\subsection{Other proofs}\label{app:otherproofs}
\paragraph{Proof of Corollary~\ref{corollary:introduction:1}}
\seth implies that there is no $\bigO(|V|^{k-\epsilon})$ algorithm for
\dominatingset, for any $\epsilon>0$. The reduction
in~\cite[Lemma~1]{thejaswi2021diversity} creates an instance of \divkmedian (\divkmeans resp.)
where $F=V$. Hence, any \fpt exact or approximate algorithm running in time
$\bigO(|F|^{k-\epsilon})$ for \divkmedian (\divkmeans resp.) contradicts \seth.

$\hfill\square$
\paragraph{Proof of Proposition~\ref{proposition:introduction:2}}
First, note that any \fpt algorithm $\mathcal{A}$  achieving a multiplicative
factor approximation for \divkmedian (\divkmeans resp.) needs to solve the lower bound requirements
first. Since these requirements capture
\dominatingset~\cite[Lemma~1]{thejaswi2021diversity}, it means $\mathcal{A}$
solves \dominatingset in \fpt time, which is a contradiction. Finally, noting
the fact that finding even $f(k)$ size dominating set, for any $f(k)$, is also
\wone-hard due to~\cite{karthik2019on} finishes the proof.

$\hfill\square$

\end{document}